\documentclass[amsmath,amssymb,aps,prl,twocolumn,nofootinbib,superscriptaddress,longbibliography]{revtex4-2}

\usepackage[english]{babel}
\usepackage[utf8]{inputenc}
\usepackage[normalem]{ulem}
\usepackage{amsmath,amssymb,amsfonts,dsfont,physics,graphicx,xcolor,mathtools,bm,hyperref}
\usepackage{amsthm}
\usepackage{algorithm,algpseudocode}
\usepackage{booktabs}
\usepackage{tikz}
\usetikzlibrary{positioning,arrows.meta,shapes.geometric,decorations.pathmorphing}

\newtheorem{proposition}{Proposition}
\newtheorem{remark}{Remark}
\hypersetup{
	colorlinks=true,
	linkcolor={red!40!black},
	citecolor={blue!60!black},
	urlcolor={blue!50!black}
	}
\graphicspath{{fig/}}
\definecolor{myred}{rgb}{0.85,0,0}
\definecolor{mygray}{rgb}{0.87, 0.87, 0.87}

\begin{document}

\title{Quantum Randomized Subspace Iteration}

\author{Stefano Scali}
\email{stefano.scali@fujitsu.com}
\affiliation{Fujitsu Research of Europe Ltd., Slough SL1 2BE, UK}
\affiliation{University of Exeter, Department of Physics and Astronomy, Stocker Road, Exeter EX4 4QL, UK}

\author{Brian Coyle}
\affiliation{Fujitsu Research of Europe Ltd., Slough SL1 2BE, UK}
\affiliation{School of Informatics, University of Edinburgh, Edinburgh, UK}

\author{Giuseppe Buonaiuto}
\affiliation{Fujitsu Research of Europe Ltd., Pozuelo de Alarcón, 28224 Madrid, Spain}

\author{Michal Krompiec}
\affiliation{Fujitsu Research of Europe Ltd., Slough SL1 2BE, UK}

\begin{abstract}
Resolving degenerate quantum eigenspaces—including topologically
ordered ground states and frustrated magnets—requires preparing
high-fidelity states that span every direction of the target manifold.
Existing variational and projective algorithms do not naturally cover a
multi-dimensional degenerate subspace without sequential
orthogonality constraints. We introduce the quantum randomized
subspace iteration (QRSI), a fully parallel construction that
conjugates the Hamiltonian by independent random unitaries across as
many branches as the degeneracy~$g$, then invokes any chosen
eigenstate-preparation primitive on each branch. The target subspace
is identified from the resulting ensemble via standard subspace
estimation, either classically through the coefficient matrix or on
hardware through Gram-matrix measurements. We prove that the construction
spans the full eigenspace almost surely
and preserves the spectral gap exactly on every branch. For practical
use, we show that these guarantees hold whenever the random rotations
satisfy an anti-concentration condition over the degenerate manifold,
substantially weaker than full Haar randomness. We
demonstrate QRSI on the toric code, recovering all four topological
ground states, and on random Hamiltonians with planted degeneracies.
\end{abstract}

\maketitle



Preparing states spanning a degenerate eigenspace of a quantum
Hamiltonian is central to quantum science, from detecting topological
phases in frustrated
magnets~\cite{balents2010_spin,savary2016quantum,Zhang2012,Cincio2013}
to resolving near-degeneracies in correlated
systems~\cite{helgaker2000molecular} and computing Betti numbers in
quantum topological data analysis
(QTDA)~\cite{LGZ2016,gyurik2022towards,ScaliUmeanoKyriienko2024}.
Even classical methods need problem-specific machinery for this:
accessing the full multi-sector ground space of a topologically
ordered Hamiltonian has required adiabatic flux insertion on
infinite cylinders in DMRG~\cite{HeShengChen2014,Cincio2013,
StoudenmireWhite2012}, tied to particular lattice geometries.
A primitive-agnostic, hardware-agnostic counterpart has been missing.
The centrality of eigenspace resolution carries over to any observable
depending on multiple eigenvectors: kernel dimension, Gram matrices, and structure factors.
The object of interest is a target degenerate eigenspace
$\mathcal{G} = \operatorname{eigenspace}(H,E_\star) =
\operatorname{span}\{\ket{v_1},\dots,\ket{v_g}\}$, e.g. with $E_\star = E_0$
as the canonical (ground) instance and dimension $g$ possibly unknown.
Given an ensemble $\{\ket{\psi_i}\}_{i=1}^M$, two requirements must be
met simultaneously:
(i)~\emph{high overlap},
$\|\Pi_{\mathcal{G}}\ket{\psi_i}\|^2 = O(1)$ for each~$i$
(i.e., bounded below by a constant independent of~$N$); and
(ii)~\emph{high diversity}, meaning the projected vectors
$\{\Pi_{\mathcal{G}}\ket{\psi_i}\}$ span~$\mathcal{G}$.
No standard primitive achieves selectability without sequential
inter-branch coupling (orthogonality penalties, deflation, or
Lagrange multipliers between successive states).
Call a probe set $\mathcal{R}\subset\mathcal{H}$
\emph{reachable} for $\mathcal{G}$ if
$\Pi_{\mathcal{G}}\mathcal{R}$ spans $\mathcal{G}$,
and a preparation map $\mathcal{P}:\mathcal{H}\to\mathcal{R}$
\emph{selectable} if some $g$ outputs span $\mathcal{G}$.
Unrandomized preparation typically achieves reachability but not
selectability: variational optimisers converge to one basin per
invocation, while projective methods with random seeds funnel to a
single dominant direction whenever the eigenvectors are structured
(stabilizer codes, frustrated magnets; Supplemental Material,
Remark~\ref{rem:eigvec_structure}). Random probes are trivially
selectable but only at the Haar baseline overlap $O(g/N)$.
Workarounds (SSVQE~\cite{nakanishi2019subspace},
VQD~\cite{higgott2019variational},
MC-VQE~\cite{parrish2019quantum},
QSE~\cite{mcclean2017hybrid}) target one eigenvalue at a time and
require a non-degenerate spectrum or hand-chosen operator basis.
We call this the \emph{diversity/overlap tradeoff}.

We introduce \emph{quantum randomized subspace iteration}
(QRSI), a framework that breaks this tradeoff via three
primitive-independent ingredients: \emph{state preparation}
concentrating weight on the target subspace, a \emph{random rotation}
injecting diversity over multiple \emph{branches}, and an \emph{overlap amplifier} sharpening the
$\mathcal{G}$-projection. Fig.~\ref{fig:numerics} demonstrates QRSI on the toric code
Hamiltonian~\cite{Kitaev2003}, recovering all four degenerate ground
states and resolving degeneracy at every excited level. In the below, we describe the mechanism and formal results for QRSI.
\begin{figure*}[htb!]
  \centering
  \includegraphics[width=\textwidth]{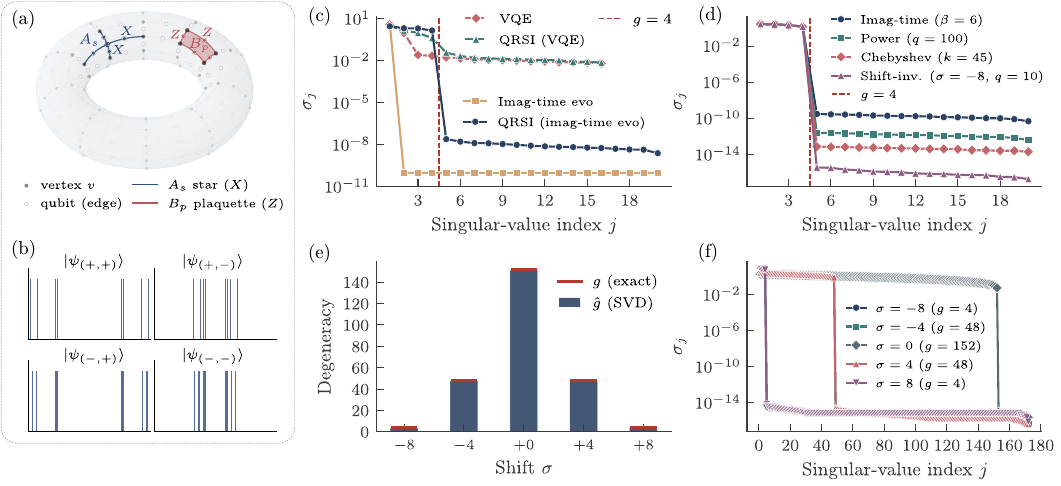}
  \caption{\textbf{QRSI on the perturbed $2{\times}2$ toric code
    ($N=256$, $g=4$, 8 qubits).}
    \textbf{(a)}~Stabilizer Hamiltonian (vertex and plaquette terms).
    \textbf{(b)}~Four topologically distinct ground states at
    $E_0\equiv\sigma=-8$.
    \textbf{(c)}~Singular-value spectrum with (navy) and without
    (slate) Haar rotation; the SVD gap at $j=g$ appears only with
    rotation.
    \textbf{(d)}~Four preparation primitives (imaginary time, power
    iteration, Chebyshev filtering, shift-and-invert) all recover the
    same degeneracy.
    \textbf{(e)}~Detected degeneracy $\hat{g}$ (bars) versus exact
    $g$ (red markers) at each target energy level via shift-and-invert
    centred on that level.
    \textbf{(f)}~Full SVD spectra at each level.}
  \label{fig:numerics}
\end{figure*}

\emph{Setup.} Consider $H \in \mathbb{C}^{N \times N}$ with $g$-fold
degenerate ground energy $E_0$ and spectral gap
$\gamma = E_{g+1} - E_0 > 0$, encoded on
$n = \lceil \log_2 N \rceil$ qubits
(when $N < 2^n$ the Hilbert space is embedded via a penalty term;
see Supplemental Material, Sec.~\ref{supp:padding}).
For each ensemble member $i = 1,\dots,M$ we sample a random unitary
$R_i \in U(N)$ and form the rotated Hamiltonian
$H_i = R_i^\dagger\,H\,R_i$.
A \emph{state-preparation stage} returns a trial state from a reachable
set $\mathcal{R} \subset \mathcal{H}$; an \emph{overlap amplifier}
returns coefficients with larger $\mathcal{G}$-weight.
Note that overlap amplifier and state-preparation maps can coincide.
The basis correction
$\bm{c}_i = R_i\,\tilde{\bm{c}}_i$
maps each amplified state back to the original frame.
Since $\mathcal{G}_i \coloneqq \operatorname{eigenspace}(H_i,E_0) = R_i^\dagger\,\mathcal{G}$,
each branch presents the
preparation stage with a randomly oriented copy of~$\mathcal{G}$.
At the branch level the construction can be written as
\begin{equation}\label{eq:qrsi_branch}
  \ket{\psi_i}
  = R_i\,\mathcal{P}^{(q)}_{\omega_i}\!\left(R_i^\dagger H R_i;\,\mathcal{R}\right),
\end{equation}
where $\mathcal{P}^{(q)}_{\omega_i}$ denotes the chosen preparation-plus-amplification
primitive, including any internal randomness $\omega_i$ (see Supplemental
Material, Fig.~\ref{fig:schematic}).
The construction admits two equivalent pictures: conjugating $H$ by
$R_i$ before preparation (Hamiltonian-rotation), or keeping $H$ fixed
and rotating the reachable set to $\mathcal{R}_i = R_i\mathcal{R}$
(state-rotation). They are unitarily equivalent branch by branch and
the choice is set by the constraints of the chosen primitive.

\begin{figure}[t]
    \centering
    \includegraphics[width=\linewidth]{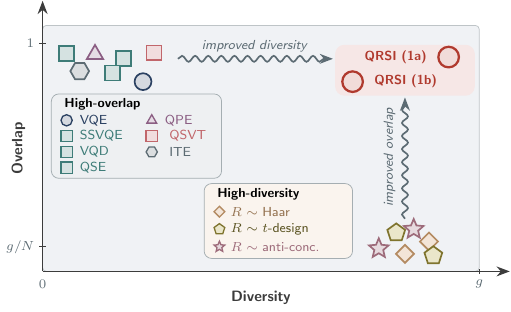}
    \caption{\textbf{Cartoon of the diversity / overlap landscape for
    quantum degenerate-eigenspace methods.}
    Schematic, not to scale. Variational and subspace-expansion methods
    (VQE family, QSE) achieve high overlap but low diversity: the
    optimiser's basin of attraction selects a single ground-state
    direction per invocation. Random probing achieves high diversity but
    only the Haar baseline overlap $O(g/N)$. The upper-right corner
    (high overlap and high diversity) is unoccupied by existing quantum
    methods. QRSI fills this corner by injecting Haar-random Hamiltonian
    conjugation into the preparation loop
    (Proposition~\ref{prop:diversity}), with a quantitative relaxation to
    anti-concentrated rotation ensembles
    (Proposition~\ref{prop:diversity_anti_concentration}). Classical randomised
    subspace iteration occupies the analogous corner for matrix
    algorithms; QRSI is its quantum analogue at the level of
    state-preparation primitives.}
    \label{fig:landscape}
\end{figure}

QRSI is the quantum analog of
classical randomized subspace iteration
(RSI)~\cite{halko2011finding,musco2015randomized,Saad2011eigenvalue},
which recovers invariant subspaces by sketching with a random probe
matrix $\Omega \in \mathbb{R}^{d \times \ell}$, amplifying via a
polynomial or rational spectral filter (power iteration $A^q\Omega$,
shift-and-invert $(A - \sigma I)^{-q}\Omega$, or Chebyshev
filtering~\cite{ZhouSaad2007}) and extracting the rank from the SVD
of the resulting sketch.
In QRSI the random rotations $R_i$ supply the probe directions;
preparation and amplification together provide the filter power that
boosts overlap from the random baseline to $O(1)$; and the SVD of the
coefficient matrix performs the rank-revealing factorization.
The analogy extends to convergence rates.
Classically, oversampling by
$\ell - g \ge 2$ columns and applying $q$ filter steps yields
approximation error decaying as
$(\gamma/\|A\|)^{2q+1}$ for power
iteration~\cite{halko2011finding}, with Chebyshev acceleration
improving the contraction to $\sim(\gamma/\|A\|)^{2q}$ at better
constants~\cite{ZhouSaad2007}.
In QRSI, the oversampling margin $M - g$ controls the probability of
spanning (Proposition~\ref{prop:diversity}), while the amplification
depth $q$ enters through the per-branch excited-state leakage
$\varepsilon_q \coloneqq \max_i\|\Pi_{\mathcal{G}^\perp}\ket{\psi_i}\|^2$,
the worst-case ground-space defect of the prepared branch states.
For imaginary-time evolution
$\varepsilon_q\sim e^{-2q\gamma}$; for QSVT eigenstate filtering
$\varepsilon_q\sim\delta^2$, where $\delta$ is the polynomial
approximation error.

The parameter~$q$ maps to primitives in two ways.
For \emph{iterative} methods (imaginary-time evolution, VQE,
adiabatic preparation), $q$ counts discrete steps, with
$C_{\mathrm{prep}} = O(q\gamma^{-1})$ for imaginary-time evolution.
For \emph{single-shot} methods (QPE~\cite{Kitaev1995, Abrams1999, PoulinWocjan2009},
QSVT eigenstate filtering~\cite{LowChuang2017, Gilyen2019, Lin2020, Dong2022}),
the entire subroutine acts as a single resolvent power at
effectively infinite~$q$: one application projects onto the
target eigenspace to accuracy~$\delta$, so preparation and
amplification collapse into a single call.

\emph{Primitives, pictures, and spectral reach.} The
Hamiltonian-rotation and state-rotation pictures are not equally
convenient for every primitive.
Spectral-filter methods (polynomial/Chebyshev filtering and Krylov
iteration) are most natural in the \emph{state-rotation} picture,
where $H$ stays fixed and $R_i$ acts only on the probe seed.
Adiabatic preparation is naturally \emph{Hamiltonian-rotation}, since
the endpoint $H_i = R_i^\dagger H R_i$ depends on~$R_i$.
Variational and projective methods work in either picture, but
state-rotation is often preferable because it avoids re-expanding
$H_i$ in Pauli strings.
Examples of primitive mapping are given in
Supplemental Material, Sec.~\ref{supp:primitives}.
Also, QRSI is not restricted to ground-state targeting.
The shift-and-invert transformation replaces the preparation filter by
$(H - \sigma I)^{-q}$, amplifying whichever eigenvalue is closest to
the shift~$\sigma$.
By sweeping $\sigma$ across the spectrum one turns QRSI into a
\emph{spectral microscope} that reads off each level's degeneracy from
the SVD gap.
Alternatively, the folded-spectrum method
$\tilde{H} = (H-\sigma I)^2$ reduces the problem to a ground-state
computation of~$\tilde{H}$, to which any of the primitives above
apply directly.
Chebyshev band-pass filters~\cite{GeTuraCirac2019} and symmetry-sector projections provide
additional options (Supplemental Material, Sec.~\ref{supp:primitives}).

\emph{Geometric interpretation.} On $\mathbb{C}P^{N-1}$ with the
Fubini--Study metric~\cite{Provost_1980,Gibbons_1992,Brody_2001,Cocchiarella_2020},
the preparation family traces a measurable subset $\mathcal{M}$ and
the optimizer's basin of attraction selects a single foot-point on
$\mathbb{C}P(\mathcal{G})$. Unitary conjugation
$[\psi]\mapsto[R\psi]$ is a Fubini--Study isometry, so QRSI
replaces $\mathcal{M}$ by $R_i\mathcal{M}$, preserving the
per-branch optimization landscape while moving the foot-point
across $\mathbb{C}P(\mathcal{G})$, upgrading local tangent-space
exploration to global isometric exploration. The full development
is in Supplemental Material, Sec.~\ref{supp:geometry}, with the
manifold-level identity for overlap preservation.

A naive serial scheme (prepare, then rotate the output) fails:
by left-invariance of the Haar measure, rotating a prepared state
erases its ground-space overlap, returning the effective overlap to
the random baseline $O(g/N)$ (see Supplemental Material, Sec.~\ref{supp:serial}).
QRSI circumvents this by placing the rotation \emph{inside} the
preparation--amplification loop, so that each branch re-optimises on
a freshly oriented instance and delivers $O(1)$ overlap ab initio.

QRSI inherits the per-call cost of whichever preparation primitive
is selected on each branch and does not amplify it beyond the $M$
parallel invocations already counted in the total
$M\cdot C_{\mathrm{prep}}$. Two implementation considerations,
sparsity destruction in the Hamiltonian-rotation picture and
primitive-specific costs in the state-rotation picture, are detailed
in Supplemental Material,
Sec.~\ref{supp:practical_challenges}.

\emph{The QRSI pipeline.} For each $i = 1,\dots,M$ independently,
sample~$R_i$, prepare on the rotated instance, amplify, and correct
back.
Every branch solves the same spectral problem with a different global
orientation of $\mathcal{G}$; the corrected ensemble spans $\mathcal{G}$
almost surely for $M \ge g$
(Proposition~\ref{prop:diversity}).

The central property underpinning QRSI is
the following.

\renewcommand{\theproposition}{1a}
\begin{proposition}[Diversity]\label{prop:diversity}
Let $R_1,\dots,R_M$ be independent Haar-random unitaries on $U(N)$,
and suppose the chosen preparation--amplification rule returns
$\ket{\tilde\phi_i}$ with nonzero ground-space component
$\Pi_{\mathcal{G}_i}\ket{\tilde\phi_i} \neq 0$ almost surely (a.s.) on every
branch.
Define the foot-point matrix $F\in\mathbb{C}^{g\times M}$ by
$F_{k,i} = \braket{v_k}{R_i|\tilde\phi_i}$.
Then for $M \ge g$, $\operatorname{rank}(F) = g$ a.s.;
equivalently, the $\mathcal{G}$-projected ensemble
$\bigl\{\Pi_{\mathcal{G}}\,R_i\ket{\tilde\phi_i}\bigr\}_{i=1}^M$
spans $\mathcal{G}$ a.s..
\end{proposition}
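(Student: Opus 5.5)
The plan is to exploit a hidden symmetry. The prepared state depends on $R_i$ only through the rotated Hamiltonian $H_i=R_i^\dagger H R_i$ (and the internal randomness $\omega_i$, which I take to be independent of $R_i$, as implicit in Eq.~\eqref{eq:qrsi_branch}). Left multiplication of $R_i$ by any unitary that commutes with $H$ leaves $H_i$ unchanged. The key step is to show that each column of $F$ is a uniformly randomly oriented nonzero vector in $\mathbb{C}^g$. The rank statement then follows from a standard independence argument.

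\emph{Step 1 (covariance).} Let $V=[\ket{v_1},\dots,\ket{v_g}]$ and define the subgroup
\[
\mathcal{W}=\{W=VUV^\dagger+\Pi_{\mathcal{G}^\perp}:U\in U(g)\}.
\]
Every $W\in\mathcal{W}$ satisfies $[W,H]=0$, since $H$ acts as $E_0$ on $\mathcal{G}$. Hence
\[
(WR_i)^\dagger H(WR_i)=R_i^\dagger H R_i .
\]
For fixed $\omega_i$ the primitive therefore returns the same $\ket{\tilde\phi_i}$ on inputs $R_i$ and $WR_i$. The $i$-th column of $F$ is $f_i(R_i)=V^\dagger R_i\ket{\tilde\phi_i}$, and it transforms as
\[
f_i(WR_i)=V^\dagger W R_i\ket{\tilde\phi_i}=U f_i(R_i).
\]
The hypothesis $\Pi_{\mathcal{G}_i}\ket{\tilde\phi_i}\neq0$ is equivalent to $\Pi_{\mathcal{G}}R_i\ket{\tilde\phi_i}\neq 0$, that is, $f_i\neq0$ a.s.

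\emph{Step 2 (isotropy).} By left-invariance of Haar measure, $R_i$ has the same law as $WR_i$ when $W$ is drawn from the Haar measure on $\mathcal{W}\cong U(g)$ independently of $R_i$ and $\omega_i$. Therefore $f_i$ has the same law as $Uf$ with $U$ Haar on $U(g)$, independent of the a.s.\ nonzero vector $f$. Conditioning on $f$ shows that $f_i/\|f_i\|$ is uniform on the unit sphere of $\mathbb{C}^g$. In particular, for any fixed proper subspace $S\subsetneq\mathbb{C}^g$ we get $\Pr[f_i\in S]=0$, because a proper subspace has measure zero on the sphere.

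\emph{Step 3 (rank by induction).} The branches $(R_i,\omega_i)$ are independent, so the columns $f_1,\dots,f_M$ are independent. I induct on $k\le g$ with the claim that $f_1,\dots,f_k$ are linearly independent a.s. Condition on $f_1,\dots,f_k$ and set $S_k=\operatorname{span}(f_1,\dots,f_k)$, which has dimension $k<g$. Step 2 applied to $f_{k+1}$, together with Fubini, gives $\Pr[f_{k+1}\in S_k]=0$, which completes the induction. Taking $k=g\le M$ yields $\operatorname{rank}F=g$ a.s. The spanning statement is equivalent: $\Pi_{\mathcal{G}}R_i\ket{\tilde\phi_i}=Vf_i$, and $V$ is an isometry onto $\mathcal{G}$.

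The main obstacle is Step 1. It requires the primitive's output to be a measurable function of $(H_i,\omega_i)$ alone, with $\omega_i$ independent of $R_i$. If the primitive secretly depended on the gauge of $R_i$ beyond $H_i$, for instance through a fixed seed in the state-rotation picture, the argument would need that picture's analogue. In that case I would instead use right-invariance of the Haar measure applied to the rotated seed. I would state this measurability and independence assumption explicitly and verify it for the Hamiltonian-rotation picture first.
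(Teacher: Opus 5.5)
Your proposal is correct and follows essentially the same route as the paper's proof. The ingredients match: the commuting subgroup $S_g\oplus\mathds{1}$ leaves $H_i$, and hence the output at fixed $\omega_i$, unchanged while rotating the foot-point; Haar left-invariance then makes the foot-point direction uniform on $S^{2g-1}$; and an inductive measure-zero argument over independent branches gives rank $g$. The differences are cosmetic: you work with the unnormalised column $V^\dagger R_i\ket{\tilde\phi_i}$ and average over a Haar-random $W$ rather than invoking uniqueness of the $U(g)$-invariant measure, and you state explicitly the assumption, implicit in the paper's Step 1, that the output depends only on $(H_i,\omega_i)$.
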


\begin{proof}[Proof sketch]
Let $\omega_i$ denote the internal randomness of the $i$-th
preparation--amplification call (e.g.\ random parameter initialization
in VQE, random measurement outcomes in QPE, or random Trotter seeds),
assumed independent of $R_i$.
By encoding all randomness in $\omega_i$, the preparation stage
is a fixed map for each~$\omega_i$:
$\mathcal{P}_{\omega_i}: H_i \mapsto
\ket{\tilde\phi_i}$.
Write $\ket{\tilde\phi_i} = \alpha_i\ket{v_i^*} + \beta_i\ket{\xi_i}$
with $\ket{v_i^*}\in\mathcal{G}_i$,
$\ket{\xi_i}\in\mathcal{G}_i^\perp$ normalised;
the hypothesis gives $\alpha_i\neq 0$ a.s.
Since $R_i\ket{\xi_i}\in\mathcal{G}^\perp$, the $i$-th column of $F$
satisfies $F_{k,i} = \alpha_i\,\braket{v_k}{R_i|v_i^*}$;
we call $f(R_i,\omega_i) \coloneqq R_i\ket{v_i^*} \in S^{2g-1}
\subset\mathcal{G}$ the \emph{foot-point}.
For any unitary~$S$ acting as $S_g \in U(g)$ on $\mathcal{G}$ and as
identity on $\mathcal{G}^\perp$, we have $[S,H] = 0$ and hence
$H_{SR_i} = H_i$; at fixed $\omega_i$ the preparation output is
unchanged, but $R_i\ket{v_i^*}\mapsto SR_i\ket{v_i^*}$.
Since $SR_i$ is Haar-distributed whenever $R_i$ is, the conditional
direction of $R_i\ket{v_i^*}$ given $\omega_i$ is
$U(g)$-invariant and therefore uniform on
$S^{2g-1} \subset \mathcal{G}$.
Because this holds for every realisation of $\omega_i$, the
unconditional distribution is also uniform on
$S^{2g-1}$ (tower property).
Since the pairs $(R_i,\omega_i)$ are mutually independent across
branches and $\alpha_i\neq 0$ a.s., for $M \ge g$ the columns of $F$
are independent draws from a continuous distribution on
$\mathbb{C}^g$; their linear dependence has measure zero, giving
$\operatorname{rank}(F) = g$ a.s..
The complete proof, including the $U(g)$-equivariance argument and the
finite-accuracy extension, is given in Supplemental Material,
Sec.~\ref{supp:proof:diversity}.
\end{proof}

\begin{remark}[Polynomial-structure cancellation]
\label{rem:cancellation}
The proof contains a structural cancellation worth isolating.
Because $\ket{v_i^*}\in\mathcal{G}_i = R_i^\dagger\mathcal{G}$, we
have $R_i\ket{v_i^*} = R_i R_i^\dagger\ket{w_i} = \ket{w_i}$ for
some $\ket{w_i}\in\mathcal{G}$, so the foot-point matrix entries
reduce to $F_{k,i} = \alpha_i\braket{v_k}{w_i}$. The explicit $R_i$
in the correction step exactly cancels the implicit $R_i^\dagger$
in the rotated ground-state basis: \emph{all polynomial dependence
on matrix elements of $R_i$ disappears}.
Consequently, standard moment-matching tools ($t$-designs) control
the \emph{input} overlaps $\braket{v_k}{R_i|\psi_0}$ (which are
polynomial in $R_i$) but not the \emph{output} foot-points (which
are not). This is why the natural extension from Haar to $t$-designs
fails, and motivates formulating diversity directly via the
anti-concentration of the foot-point distribution.
\end{remark}

The proof of Proposition~\ref{prop:diversity} above relies on left-invariance of Haar measure, which
requires exponential circuit depth. Diversity only requires $g$
linearly independent foot-points, not uniformity on $S^{2g-1}$. We
relax the Haar condition to an anti-concentration property of the
rotation distribution. Let $\nu$ be the distribution from which the
rotations $R_i$ are drawn (e.g.\ Haar measure, or the output
distribution of a random circuit family). We say $(\nu,f)$ satisfies
$(\eta,\delta)$-\emph{anti-concentration} if, for every hyperplane
$V\subset\mathcal{G}$,
$\Pr_{R\sim\nu}[d_{FS}(\hat w,V)\ge\delta]\ge\eta$, where
$\hat w$ is the normalized foot-point and
$d_{FS}(\hat w,V) = \arccos\|\Pi_V\hat w\|$ is the Fubini--Study
distance to $V$. The parameter $\eta$ controls the escape
probability; $\delta>0$ ensures linear independence.

\renewcommand{\theproposition}{1b}
\setcounter{proposition}{0}
\begin{proposition}[Diversity under anti-concentration]
\label{prop:diversity_anti_concentration}
Let $F\in\mathbb{C}^{g\times M}$ be the foot-point matrix with
entries $F_{k,i} = \braket{v_k}{R_i|\tilde\phi_i}$ as in
Proposition~\ref{prop:diversity}.
If $(\nu,f)$ satisfies $(\eta,\delta)$-anti-concentration, then $M=g$
independent branches give
$\Pr[\operatorname{rank}(F)=g]\ge\eta^g$, amplifiable to
$1-\varepsilon$ with
$M\le g\lceil\eta^{-g}\log(1/\varepsilon)\rceil$ branches.
\end{proposition}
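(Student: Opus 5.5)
We build the columns of $F$ one branch at a time and ask that each new foot-point leave the span of the previous ones by a definite margin. As in the proof of Proposition~\ref{prop:diversity}, write $\ket{\tilde\phi_i}=\alpha_i\ket{v_i^*}+\beta_i\ket{\xi_i}$. Then column $i$ of $F$ equals $\alpha_i$ times the coordinate vector of the foot-point $w_i=R_i\ket{v_i^*}\in\mathcal{G}$ in the basis $\{\ket{v_k}\}$. Since $\alpha_i\neq0$ a.s., $\operatorname{rank}(F)$ equals $\dim\operatorname{span}\{\hat w_1,\dots,\hat w_M\}$, so only the normalized foot-points $\hat w_i$ matter. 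We condition on the internal randomness $\omega_i$ only as needed to make the pairs $(R_i,\omega_i)$ independent across branches. We take the anti-concentration hypothesis to hold for the joint law of $\hat w$.

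\emph{Sequential escape for $M=g$.} Let $W_j=\operatorname{span}\{\hat w_1,\dots,\hat w_j\}$ with $W_0=\{0\}$, and let $E_j$ be the event that $\dim W_j=j$. For $j<g$, choose on $E_j$ a hyperplane $V_j\subset\mathcal{G}$ containing $W_j$. Any hyperplane works, and the choice is measurable in $\hat w_1,\dots,\hat w_j$. Then $d_{FS}(\hat w_{j+1},V_j)\ge\delta>0$ gives $\|\Pi_{V_j}\hat w_{j+1}\|\le\cos\delta<1$. So $\hat w_{j+1}\notin V_j\supseteq W_j$, and $\dim W_{j+1}=j+1$.

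Because branch $j+1$ is independent of branches $1,\dots,j$, we may condition on $\hat w_1,\dots,\hat w_j$. The hypothesis, applied to the now-fixed hyperplane $V_j$, gives $\Pr[E_{j+1}\mid\mathcal{F}_j]\ge\eta$ on $E_j$, where $\mathcal{F}_j$ is the $\sigma$-algebra of the first $j$ branches. The tower property then gives $\Pr[E_{j+1}]\ge\eta\Pr[E_j]$. Iterating from $\Pr[E_0]=1$ yields $\Pr[\operatorname{rank}F=g]\ge\eta^g$.

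This step is the one to handle with care. The hypothesis is stated for deterministic hyperplanes, but we apply it to a random, data-dependent $V_j$. Independence across branches together with a measurable selection of $V_j$ is exactly what justifies this. The margin $\delta>0$ is needed so that escape from $V_j$ implies strict linear independence, and not merely that $\hat w_{j+1}$ lies off a measure-zero set.

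\emph{Amplification.} Partition $M=gT$ branches into $T$ disjoint blocks of $g$. Blocks are independent, and each block alone has full rank with probability at least $\eta^g$ by the step above. If any block has rank $g$, its columns are a subset of the columns of $F$, so $\operatorname{rank}F=g$. Hence
\[
\Pr[\operatorname{rank}F<g]\le(1-\eta^g)^T\le e^{-T\eta^g}.
\]
Taking $T=\lceil\eta^{-g}\log(1/\varepsilon)\rceil$ makes this at most $\varepsilon$, with $M=gT$ branches, as claimed. A greedy scheme that reuses partial progress across blocks would give better bounds, but the block argument suffices for the stated ones.
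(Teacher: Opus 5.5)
Your proof is correct and follows the paper's supplemental argument essentially step for step: you extend the span of the previous normalized foot-points to a hyperplane, apply anti-concentration to that hyperplane using the new branch's independence to get a per-step escape probability $\ge\eta$, and amplify over $T$ disjoint blocks of $g$ branches with $(1-\eta^g)^T\le e^{-T\eta^g}\le\varepsilon$. Your event-based bookkeeping $\Pr[E_{j+1}]\ge\eta\Pr[E_j]$ via the tower property is a slightly more careful version of the paper's product of conditional probabilities and its Step~3 remarks on conditioning, as is your use of the bound for the joint law of $(R,\omega)$, which the paper's almost-every-$\omega$ version implies.
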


Two regimes are worth distinguishing.
\emph{(a)~Generic 2-design baseline.}
For an approximate 2-design rotation ensemble with a non-adaptive
preparation, the Paley--Zygmund inequality gives a problem-independent
bound $\eta\ge 1/8$, yielding $\Pr[\operatorname{rank}=g]\ge 8^{-g}$
per trial. The associated sample complexity
$M = O(g\,\eta^{-g}\log(1/\varepsilon))$ is exponential in $g$ at
fixed $\eta<1$, so the worst-case bound is most useful for small
degeneracies ($g\lesssim 10$); even at $g=4$ it asks for several
hundred branches.
\emph{(b)~Problem-specific instance.}
A direct numerical estimate for the toric code with shift-and-invert
preparation gives an empirical $\eta\approx 0.97$, for which only
$M\approx 24$ branches suffice for $99\%$ confidence at $g=4$. The
gap between (a) and (b) is the looseness of the worst-case
Paley--Zygmund bound, which can be tightened by problem-specific
analysis. Haar randomness recovers
Proposition~\ref{prop:diversity} as the limiting case
$\eta = \cos^{2(g-1)}\!\delta\to 1$ as $\delta\to 0$.
The full proof, the Haar verification, and the 2-design analysis are
in Supplemental Material,
Sec.~\ref{supp:proof:diversity_anti_concentration}.

\renewcommand{\theproposition}{\arabic{proposition}}

\begin{proposition}[Spectral invariance]\label{prop:spec}
For any $R_i \in U(N)$, $\operatorname{spec}(H_i) = \operatorname{spec}(H)$
and in particular $\gamma_i = \gamma$.
\end{proposition}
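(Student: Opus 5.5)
The plan is to use the fact that $H_i = R_i^\dagger H R_i$ is a unitary similarity transform of $H$, and then track eigenvalues with their multiplicities. The key observation is an eigenvector correspondence. Since $R_i$ is unitary, $R_i^\dagger = R_i^{-1}$. Suppose $H\ket{v} = E\ket{v}$ and set $\ket{u} \coloneqq R_i^\dagger\ket{v}$. Then
\begin{equation*}
H_i\ket{u} = R_i^\dagger H R_i R_i^\dagger \ket{v} = R_i^\dagger H\ket{v} = E\,R_i^\dagger\ket{v} = E\ket{u},
\end{equation*}
and $\ket{u}\neq 0$ because $R_i$ is invertible. Running the same argument with $R_i$ in place of $R_i^\dagger$ gives the converse: if $H_i\ket{u}=E\ket{u}$, then $R_i\ket{u}$ is an eigenvector of $H$ with eigenvalue $E$. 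Hence the eigenvalue sets coincide, and $\operatorname{eigenspace}(H_i,E) = R_i^\dagger\,\operatorname{eigenspace}(H,E)$ for every $E$. This also recovers the relation $\mathcal{G}_i = R_i^\dagger\mathcal{G}$ already used in the setup.

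Next I need equality of spectra as multisets, not only as sets. The characteristic polynomials agree:
\begin{equation*}
\det(H_i - \lambda I) = \det\bigl(R_i^\dagger (H-\lambda I) R_i\bigr) = \det(R_i^\dagger)\det(H-\lambda I)\det(R_i) = \det(H-\lambda I),
\end{equation*}
using $\det(R_i^\dagger)\det(R_i) = \det(R_i^\dagger R_i) = 1$. So algebraic multiplicities agree. Since $H$ and $H_i$ are Hermitian, $H_i^\dagger = R_i^\dagger H^\dagger R_i = H_i$, and so algebraic and geometric multiplicities coincide. The geometric multiplicities also match directly, because $R_i^\dagger$ is a bijective isometry between corresponding eigenspaces and therefore preserves their dimension.

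With the ordered spectra equal, I will read off the gap. The ground energy is the same $E_0$ with the same degeneracy $g$, and the next distinct level is the same $E_{g+1}$. Therefore $\gamma_i = E_{g+1}-E_0 = \gamma$, and the same reasoning shows that every inter-level gap is preserved.

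I do not expect a real obstacle, since this is a standard similarity-invariance argument. The only point needing care is that $\gamma$ is defined through the ordered, multiplicity-counted eigenvalues $E_0=\dots=E_{g-1}<E_{g+1}$ (indexing as in the paper). Preserving the spectrum only as a set would be too weak, because it would not show that the gap sits at index $g$ in every branch; the determinant identity supplies the multiplicity-level statement that rules this out. For the padded case of Supplemental Material, Sec.~\ref{supp:padding}, the same argument applies verbatim to the embedded Hamiltonian, provided $R_i$ acts on the full $2^n$-dimensional space.
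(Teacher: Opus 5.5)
Your argument is correct and is the same as the paper's. The paper proves the claim in one line: $H_i = R_i^\dagger H R_i$ is unitarily similar to $H$, so it has the same spectrum, and hence $E_0$, $E_{g+1}$ and $\gamma$ are unchanged. Your eigenvector correspondence and characteristic-polynomial identity just make the multiplicity bookkeeping explicit; the chain $E_0=\dots=E_{g-1}<E_{g+1}$ skips $E_g$, but that is a harmless notational slip that mirrors the paper's own loose indexing.
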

\begin{proof}
$H_i = R_i^\dagger H R_i$ is unitarily similar to $H$, so they share all
eigenvalues (full statement and consequences in Supplemental Material,
Sec.~\ref{supp:proof:overlap}).
\end{proof}
The randomization therefore introduces no intrinsic spectral penalty,
and any performance variation across branches is inherited from the chosen
primitive.

\begin{proposition}[SVD gap]\label{prop:svd_gap}
Let $C\in\mathbb{C}^{N\times M}$ be the coefficient matrix of the
QRSI ensemble and $F = \Pi_{\mathcal{G}}C$ the foot-point matrix.
Under the hypotheses of Proposition~\ref{prop:diversity},
\begin{equation}\label{eq:svd_gap_main}
  \sigma_{g+1}(C) \;\le\; \sqrt{M\,\varepsilon_q},
  \quad
  \sigma_g(C) \;\ge\; \sigma_g(F) - \sqrt{M\,\varepsilon_q},
\end{equation}
with $\sigma_g(F) > 0$ a.s. for $M\ge g$. Hence the SVD
gap ratio $\sigma_g(C)/\sigma_{g+1}(C)$ grows as $\varepsilon_q^{-1/2}$
in the regime $\sqrt{M\varepsilon_q} \ll \sigma_g(F)$.
\end{proposition}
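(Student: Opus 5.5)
Decompose the coefficient matrix as $C = F + E$, where $F=\Pi_{\mathcal{G}}C$ and $E=\Pi_{\mathcal{G}^\perp}C$. Here, with slight abuse of notation, $F$ is identified with its embedding in $\mathbb{C}^{N\times M}$. This embedding has rank at most $g$ and the same nonzero singular values as the $g\times M$ foot-point matrix of Proposition~\ref{prop:diversity}, since $\Pi_{\mathcal{G}}$ is an isometry onto $\mathcal{G}\cong\mathbb{C}^g$ in the basis $\{\ket{v_k}\}$. The whole argument is then a perturbation bound on singular values, with $E$ treated as a small perturbation of the rank-$g$ matrix $F$.

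First, bound $\|E\|$. Its $i$-th column is $\Pi_{\mathcal{G}^\perp}\ket{\psi_i}$, whose squared norm is at most $\varepsilon_q$ by the definition of the per-branch leakage. Hence $\|E\|_2\le\|E\|_F=\bigl(\sum_i\|\Pi_{\mathcal{G}^\perp}\ket{\psi_i}\|^2\bigr)^{1/2}\le\sqrt{M\varepsilon_q}$. Using the spectral norm via the Frobenius norm is crude but gives exactly the stated constant.

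Second, apply Weyl's inequality for singular values, $|\sigma_j(F+E)-\sigma_j(F)|\le\|E\|_2$, for every $j$. For $j=g+1$ we have $\sigma_{g+1}(F)=0$ because $\operatorname{rank}F\le g$, which gives $\sigma_{g+1}(C)\le\sqrt{M\varepsilon_q}$. For $j=g$ it gives $\sigma_g(C)\ge\sigma_g(F)-\sqrt{M\varepsilon_q}$. Both are the claimed bounds in \eqref{eq:svd_gap_main}. Alternatively, the upper bound follows from Eckart--Young, since $\sigma_{g+1}(C)\le\|C-F\|_2$ for the rank-$g$ matrix $F$.

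Third, positivity. Proposition~\ref{prop:diversity} gives $\operatorname{rank}F=g$ almost surely for $M\ge g$, and hence $\sigma_g(F)>0$ almost surely. The ratio is then bounded below by $(\sigma_g(F)-\sqrt{M\varepsilon_q})/\sqrt{M\varepsilon_q}=\sigma_g(F)/\sqrt{M\varepsilon_q}-1$. This scales as $\varepsilon_q^{-1/2}$ when $\sqrt{M\varepsilon_q}\ll\sigma_g(F)$.

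The only subtle point is whether $F$ depends on $q$, and so degrades as $\varepsilon_q$ shrinks. I would note that $\sigma_g(F)$ is controlled by the ground-space amplitudes: each column has squared norm $1-\|\Pi_{\mathcal{G}^\perp}\ket{\psi_i}\|^2\ge1-\varepsilon_q$. The statement only asserts scaling at a fixed realization of $F$, so no uniform lower bound on $\sigma_g(F)$ is required; the claim reduces to the Weyl step above.
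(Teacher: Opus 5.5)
Your proof is correct and matches the paper's argument essentially step for step: the split $C=\Pi_{\mathcal{G}}C+\Pi_{\mathcal{G}^\perp}C$, the bound $\|C_\perp\|\le\|C_\perp\|_F\le\sqrt{M\varepsilon_q}$, Weyl's inequality in both directions using $\operatorname{rank}(\Pi_{\mathcal{G}}C)\le g$ and $\Pi_{\mathcal{G}}C=V_gF$, and $\sigma_g(F)>0$ from Proposition~\ref{prop:diversity}. On your $q$-dependence concern, the paper goes one step further: it bounds $\sigma_g(F)\ge(1-\varepsilon_q)^{1/2}\sigma_g(\hat W)$, where $\hat W$ collects the normalised foot-points, which are uniform on $S^{2g-1}$ for every $q$, so the $\varepsilon_q^{-1/2}$ scaling holds across amplification depths and not only for a fixed realisation of $F$.
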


\noindent The proof, via Weyl's perturbation inequality applied to
the decomposition $C = C_\parallel + C_\perp$, is in Supplemental
Material, Sec.~\ref{supp:svd_gap}.
This mirrors the classical RSI contraction ratio.

Proposition~\ref{prop:diversity}
requires reachability of $\mathcal{R}$ for $\mathcal{G}$.
This is \emph{not} the bottleneck QRSI overcomes: a reachable but
non-selectable preparation still suffers the diversity / overlap
tradeoff.
QRSI upgrades reachability to selectability by rotating the problem
instance so that the optimiser's fixed basin-selection rule yields a
different ground-state direction per branch.
If $\mathcal{R}$ is not reachable for some direction
$\ket{v}\in\mathcal{G}$, no rotation can create that overlap and the
spanning guarantee fails.
In practice, reachability holds for any expressible VQE ansatz
(e.g.\ a hardware-efficient circuit with depth $\Omega(n)$ and full
entanglement) and, for QPE or QSVT, requires only that the initial
state have nonzero amplitude on $\mathcal{G}$, which holds
generically.
The framework thus cleanly separates the ansatz-design question
(reachability) from the ensemble-construction question (selectability).

Once the ensemble
$\{\ket{\psi_i}\}_{i=1}^M$ has been produced, extracting a target
quantity is a separate, application-dependent step.
When full coefficient vectors $\bm{c}_i$ are available (e.g.\ in
classical simulation), one forms the matrix
$C = [\bm{c}_1\,|\,\cdots\,|\,\bm{c}_M]
\in \mathbb{C}^{N\times M}$
and reads off the degeneracy as
$\hat{g} = \#\{j : \sigma_j(C) > \tau\}$.
On quantum hardware, state-vector access is not required:
the Gram matrix $G_{ij} = \braket{\psi_i}{\psi_j}$ can be estimated
via SWAP tests~\cite{Buhrman_2001} or randomized
measurements~\cite{Elben_2022}, and its rank reveals~$\hat{g}$;
cross-fidelities~\cite{Su_2013,Liang_2019}, kernel
matrices~\cite{havlicek2019supervised}, and subspace-spanning
observables are likewise accessible through purely quantum protocols.
QRSI's core contribution, generating a high-overlap,
high-diversity ensemble spanning $\mathcal{G}$, is therefore
independent of the extraction route.
The pipeline is formalised in Algorithm~\ref{alg:qrsi}.

\begin{algorithm}[H]
  \caption{QRSI}
  \label{alg:qrsi}
  \begin{algorithmic}[1]
    \Require $H \in \mathbb{C}^{N\times N}$, ensemble size
      $M$, threshold $\tau$,
      picture $\in\{\text{Hamiltonian, state}\}$
    \Ensure Ensemble $\{\ket{\psi_i}\}_{i=1}^M$ spanning $\mathcal{G}$
      (optionally: degeneracy estimate $\hat{g}$)
    \For{$i = 1,\dots,M$}
      \State Sample $R_i \sim U(N)$
      \If{Hamiltonian picture}
        \State $H_i \gets R_i^\dagger H R_i$
        \State Prepare trial state on $H_i$;
          amplify $\to \tilde{\bm{c}}_i$
        \State $\bm{c}_i \gets
          R_i\,\tilde{\bm{c}}_i$
      \Else\hfill\Comment{State picture}
        \State Prepare from $R_i\mathcal{R}$ on $H$;
          amplify $\to \bm{c}_i$
      \EndIf
    \EndFor
    \State $C \gets [\bm{c}_1\,|\,\cdots\,|\,\bm{c}_M]$;
      $\hat{g} \gets \#\{j : \sigma_j(C) > \tau\}$
      \Comment{Classical; or estimate rank of Gram matrix quantumly}
    \State \Return $\{\ket{\psi_i}\}$ (and $\hat{g}$ if computed)
  \end{algorithmic}
\end{algorithm}

\emph{Test instance: the toric code Hamiltonian.}
We test the framework on a system where exact degeneracy is
physically meaningful and topologically protected.
Returning to Figure~\ref{fig:numerics}, the toric code~\cite{Kitaev2003} is
defined on an $L_x \times L_y$ lattice
with periodic boundary conditions, placing one qubit on each edge for a
total of $2L_xL_y$ qubits. The Hamiltonian \(H = -J_s \sum_{s} A_s - J_p \sum_{p} B_p\)
is a sum of mutually commuting stabilizers: vertex operators
\(A_s = \prod_{e \in \mathrm{star}(s)} X_e\) and plaquette operators
\(B_p = \prod_{e \in \partial p} Z_e\). On the torus the ground space
is four-fold degenerate, encoding two logical qubits whose topological
sectors are labelled by the eigenvalues $(\pm 1, \pm 1)$ of non-contractible
Wilson loop operators $\bar{Z}_1$ and $\bar{Z}_2$.

The topological phase is characterized by the modular $S$ and $T$
matrices acting on the four-dimensional ground subspace, whose
off-diagonal elements encode the braiding and exchange statistics of
the anyon excitations~\cite{Kitaev2006,Zhang2012,Cincio2013};
reconstructing them requires all four ground states.
A quantum simulation therefore faces two prerequisites: high-fidelity
preparation, since the modular matrices are sensitive to sector
mixing, and coverage of all four sectors, disconnected below the
spectral gap $\Delta = 2\min(J_s, J_p)$.
QRSI addresses both: $M \ge g = 4$ branches span all sectors
(Proposition~\ref{prop:diversity}), and each branch inherits the
fidelity of the chosen primitive. Concretely, recovering the four
ground states with SSVQE or VQD requires either a single
state-averaged ansatz simultaneously parameterising all four sectors
or four sequential optimisations coupled by orthogonality penalties;
QRSI replaces both with four independent runs of any single-state
primitive on independently rotated Hamiltonians, with no inter-branch
coupling and no penalty bookkeeping.
Figure~\ref{fig:numerics} makes the diagnostic concrete; we add a
small real perturbation to also test the near-degenerate case.
The columns of $C = [\bm{c}_1|\cdots|\bm{c}_M]$ record the branch
outputs in a common basis, and the singular values of $C$ count how
many linearly independent directions are actually populated.
If the ensemble spans a $g$-dimensional eigenspace, $C$ has numerical
rank $g$: the first $g$ singular values remain large while
$\sigma_{g+1}$ and below fall to the noise floor.
The drop at the $(j{+}1)$-th singular value is therefore the
operational signature of a $j$-dimensional spanned subspace in a
geometrical sense, with negligible residual weight along any further
direction.

In Fig.~\ref{fig:numerics}, panels~(c) and~(d) show that QRSI converts
high-fidelity preparation into a genuinely four-dimensional ensemble
rather than repeated access to one sector; panels~(e) and~(f) extend
the same rank signal across the full spectrum when the target energy
is varied.

\emph{Discussion.} QRSI applies to any Hamiltonian with a degenerate
or near-degenerate target eigenspace at any energy: frustrated
magnets~\cite{balents2010_spin,savary2016quantum}, correlated-electron
systems~\cite{helgaker2000molecular}, and topologically ordered
phases are natural targets. We further validate it on a
$256$-dimensional complex-Hermitian random Hamiltonian with planted
degeneracies $g=6$ and $g=11$, recovering both correctly
(Supplemental Material, Sec.~\ref{supp:random_hamiltonian}). This
non-stabilizer, non-topological instance shows that the spanning
guarantee is not specific to structured ground spaces. Classical
analogues of the same multi-sector recovery problem have required
ad hoc lattice-specific machinery, such as adiabatic flux insertion
in DMRG~\cite{HeShengChen2014,Cincio2013,Karrasch2013}; QRSI is the
quantum, primitive-agnostic counterpart.

$R_i$ is unitary, so it introduces no new decoherence channels and
noise resilience is inherited from the primitive. Full Haar unitaries
may be replaced by approximate $t$-designs, for instance products of
$O(N\log N)$ random Givens rotations at $O(\log N)$ depth, or random
circuits of depth
$O(\mathrm{poly}(n))$~\cite{brandao2016local,harrow2023approximate};
by Proposition~\ref{prop:diversity_anti_concentration} the spanning
guarantee survives any ensemble satisfying
$(\eta,\delta)$-anti-concentration. Primitive agnosticism is
structural: variational, adiabatic, imaginary-time, Monte Carlo, and
phase-estimation methods can all serve as the preparation stage. The
central design principle, placing the rotation inside the preparation
loop so diversity and overlap cooperate, unifies a broad class of
quantum eigenspace algorithms and links them to classical RSI. The
operationally relevant regime is QRSI with shallow random circuits,
sitting between the Haar limit and the zero-randomness orthogonality
enforcement of deflation methods (SSVQE, VQD, excited-state QMC; see
Supplemental Material, Sec.~\ref{supp:orthogonality}).

\begin{acknowledgments}
The authors acknowledge support from Fujitsu Research of Europe.
\end{acknowledgments}


\bibliographystyle{unsrtnat}
\bibliography{main}

\newpage
\appendix
\setcounter{secnumdepth}{3}
\onecolumngrid


\section{Padding to qubit dimension}
\label{supp:padding}

When the physical Hilbert-space dimension $N$ is not a power of two,
the Hamiltonian must be embedded on $n = \lceil \log_2 N \rceil$
qubits.  We extend $H$ to a $2^n \times 2^n$ matrix
\begin{equation}\label{eq:Hpad}
  H^{(\mathrm{pad})} =
  \begin{pmatrix}
    H & 0 \\
    0 & \Lambda\,\mathds{1}_{2^n - N}
  \end{pmatrix},
\end{equation}
where $\Lambda \gg \|H\|$ is a penalty that makes every padded level
energetically inaccessible.  For all practical purposes the physical
spectrum is undisturbed: the ground eigenspace of
$H^{(\mathrm{pad})}$ is identical to that of~$H$.

All random rotations $R_i$ must respect this block structure so that
the padded subspace is never mixed with the physical one.  Concretely,
we choose
\begin{equation}\label{eq:blockR}
  R_i =
  \begin{pmatrix}
    R_i^{(\mathrm{phys})} & 0 \\[2pt]
    0 & \mathds{1}_{2^n - N}
  \end{pmatrix},
  \quad R_i^{(\mathrm{phys})} \in U(N).
\end{equation}
With this convention
$H_i = R_i^\dagger\,H^{(\mathrm{pad})}\,R_i$ reduces to
$H_i^{(\mathrm{phys})} = (R_i^{(\mathrm{phys})})^\dagger H\,R_i^{(\mathrm{phys})}$
in the physical block, and the padded block is invariant.
When $N = 2^n$ no padding is needed and $R_i \in U(N)$ directly;
in the main text we suppress the superscript and write simply $R_i$
and~$H$.

\section{Full Proof of Proposition 1a (Diversity)}
\label{supp:proof:diversity}

\begin{proposition}[Diversity, full statement]
Let $H \in \mathbb{C}^{N \times N}$ be Hermitian with $g$-fold degenerate
ground eigenspace $\mathcal{G} = \operatorname{eigenspace}(H,E_0) = \operatorname{span}\{\ket{v_1},\dots,\ket{v_g}\}$
and spectral gap $\gamma > 0$.
Let $R_1,\dots,R_M \in U(N)$ be independent Haar-random unitaries on the
physical subspace (block-diagonal as in Eq.~\eqref{eq:blockR}), and suppose
the chosen preparation--amplification rule returns a branch output
$\ket{\tilde\phi_i}$ with nonzero ground-space component,
$\Pi_{\mathcal{G}_i}\ket{\tilde\phi_i} \neq 0$, almost surely for every
rotated instance $H_i = R_i^\dagger H R_i$.
Then for any $M \ge g$, the foot-point matrix $F\in\mathbb{C}^{g\times M}$
with $F_{k,i} = \braket{v_k}{R_i|\tilde\phi_i}$ satisfies
$\operatorname{rank}(F) = g$ with probability one; equivalently,
the $\mathcal{G}$-projected ensemble
$\bigl\{\Pi_{\mathcal{G}}\,R_i\ket{\tilde\phi_i}\bigr\}_{i=1}^M$
spans $\mathcal{G}$ almost surely.
\end{proposition}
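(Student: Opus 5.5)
The plan is to turn the main-text sketch into a rigorous argument in three steps. First we set up the measurable structure. Then we prove that each column of $F$ has a $U(g)$-invariant direction. Finally we deduce full rank from independence and absolute continuity.

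\emph{Step 1 (fixing the randomness).} Write the branch output as $\ket{\tilde\phi_i}=\mathcal{P}_{\omega_i}(R_i^\dagger H R_i)$. Here $\omega_i$ collects all internal randomness, is independent of $R_i$, and the pairs $(R_i,\omega_i)$ are mutually independent across $i$. We assume $\mathcal{P}_\omega$ is a measurable function of the Hamiltonian alone. This implicit hypothesis of the statement is exactly what lets the argument go through, so we will state it explicitly. Decompose $\ket{\tilde\phi_i}=\alpha_i\ket{v_i^*}+\beta_i\ket{\xi_i}$ with $\ket{v_i^*}\in\mathcal{G}_i=R_i^\dagger\mathcal{G}$ and $\ket{\xi_i}\in\mathcal{G}_i^\perp$. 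Since $R_i\mathcal{G}_i^\perp=\mathcal{G}^\perp$, the $i$-th column of $F$ is $\alpha_i$ times the coordinate vector $x_i\in\mathbb{C}^g$ of the unit foot-point $R_i\ket{v_i^*}\in\mathcal{G}$. By hypothesis $\alpha_i\neq0$ a.s., so $\operatorname{rank}F=\operatorname{rank}[x_1|\cdots|x_M]$ a.s.

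\emph{Step 2 ($U(g)$-equivariance).} Fix $\omega$. For any $S_g\in U(g)$, let $S=S_g\oplus\mathds{1}_{\mathcal{G}^\perp}$, which is block-diagonal as in Eq.~\eqref{eq:blockR} when padding is present. Because $S$ acts as a scalar-commuting block on each eigenspace of $H$ restricted to $\mathcal{G}\oplus\mathcal{G}^\perp$, we have $[S,H]=0$. Hence $(SR)^\dagger H(SR)=R^\dagger HR$, and $\mathcal{P}_\omega$ returns the identical vector $\ket{\tilde\phi}$ for $R$ and $SR$. The decomposition into $\mathcal{G}_i$ and its complement is also unchanged, so $\ket{v^*}$ and $\alpha$ are unchanged. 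The foot-point therefore transforms as $x\mapsto S_g x$. Left-invariance of Haar measure (restricted to the physical block) gives $SR\overset{d}{=}R$. So, conditionally on $\omega$, the law of $x$ is a $U(g)$-invariant probability measure on $S^{2g-1}$, and must be the uniform measure. Integrating over $\omega$ (tower property), $x_i$ is uniform on $S^{2g-1}$ unconditionally, and the $x_i$ are independent.

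\emph{Step 3 (rank).} We show by induction on $k\le g$ that $x_1,\dots,x_k$ are linearly independent a.s. Condition on $x_1,\dots,x_{k-1}$, which span a subspace $V$ of complex dimension $k-1<g$. The set $V\cap S^{2g-1}$ is a proper great subsphere, which has uniform measure zero. So by independence $x_k\notin V$ a.s. (Fubini). Taking $k=g$ and noting that extra columns for $M>g$ cannot reduce rank, we get $\operatorname{rank}F=g$ a.s. The equivalence with spanning follows because $\Pi_{\mathcal{G}}R_i\ket{\tilde\phi_i}=\sum_k F_{k,i}\ket{v_k}$.

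The main obstacle is Step 2: justifying that the output is unchanged under $R\mapsto SR$. This requires that the primitive depends on $R$ only through $H_i$, and not, for example, on a reachable set that is itself rotated. In the state-rotation picture with fixed $\mathcal{R}$, the map $H_i\mapsto\ket{\tilde\phi_i}$ still depends only on $H_i$, so the argument holds. I would state this measurability and dependence assumption explicitly, and check that the block structure of the padding keeps $S$ inside the Haar-invariant group.
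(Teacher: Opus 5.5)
Your proposal is correct and follows the paper's proof essentially step for step: the same decomposition $\ket{\tilde\phi_i}=\alpha_i\ket{v_i^*}+\ket{\xi_i}$ with foot-point $R_i\ket{v_i^*}$, the same $U(g)$-equivariance argument (via $S=S_g\oplus\mathds{1}$ commuting with $H$, Haar left-invariance, and the tower property), and the same inductive measure-zero argument for spanning. Your explicit caveat that $\mathcal{P}_\omega$ must depend on $R_i$ only through $H_i$ is exactly the assumption the paper builds in by treating the preparation at fixed $\omega_i$ as a fixed map $H_i\mapsto\ket{\tilde\phi_i}$.
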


\begin{proof}
We work in three steps: we define the foot-point map, establish its
$U(g)$-equivariance, and conclude spanning.

\medskip\noindent\textbf{Step 1: Foot-point map.}
Let $\omega_i$ denote the internal randomness of the $i$-th
preparation--amplification call (e.g.\ random parameter initialization
in VQE, random measurement outcomes in QPE, or random Trotter seeds),
assumed independent of $R_i$.
For fixed~$\omega_i$ the preparation stage is a fixed map
$\mathcal{P}_{\omega_i}: H_i \mapsto \ket{\tilde\phi_i} \in \mathcal{R}$.
Since $\mathcal{G}_i \coloneqq \operatorname{eigenspace}(H_i,E_0) = R_i^\dagger\,\mathcal{G}$, we decompose
\begin{equation}
  \ket{\tilde\phi_i}
  = \alpha_i \ket{v_i^*} + \ket{\xi_i},
  \quad \ket{v_i^*} \in \mathcal{G}_i,
  \quad \ket{\xi_i} \perp \mathcal{G}_i,
  \quad \alpha_i \neq 0,
\end{equation}
where $\alpha_i \neq 0$ is exactly the branchwise nonzero-overlap
assumption in the proposition.
Since $\ket{v_i^*} \in R_i^\dagger\,\mathcal{G}$, the corrected ground-state
component $R_i\ket{v_i^*}$ lies in $\mathcal{G}$.
Define the \emph{foot-point map}
\begin{equation}\label{eq:footpoint}
  f: U(N) \times \Omega \longrightarrow S^{2g-1} \subset \mathcal{G},
  \qquad f(R_i,\omega_i) = R_i\ket{v_i^*},
\end{equation}
given that $\ket{v_i^*}$ is a unit vector (which itself depends on
$R_i$ and $\omega_i$ through the preparation output).

\medskip\noindent\textbf{Step 2: $U(g)$-equivariance and uniformity.}
Let $S_g \in U(g)$ be an arbitrary unitary on $\mathcal{G}$, and extend it
to $U(N)$ by acting as the identity on $\mathcal{G}^\perp$:
\begin{equation}
  S = S_g \oplus \mathds{1}_{N-g}
  \quad\text{in the decomposition}
  \quad \mathbb{C}^N = \mathcal{G} \;\oplus\; \mathcal{G}^\perp.
\end{equation}
Since $S$ acts within a single eigenspace of $H$, it commutes with
$H$: $[S, H] = 0$.
Consequently, the rotated Hamiltonian for $SR_i$ is
\begin{equation}
  H_{SR_i}
  = (SR_i)^\dagger H (SR_i)
  = R_i^\dagger S^\dagger H S R_i
  = R_i^\dagger H R_i
  = H_i.
\end{equation}
The preparation stage, at fixed internal randomness~$\omega_i$, produces
the \emph{same} output
$\mathcal{P}_{\omega_i}(H_{SR_i}) = \mathcal{P}_{\omega_i}(H_i) = \ket{\tilde\phi_i}$
and therefore the same ground-state component
$\ket{v_i^*} \in \mathcal{G}_i$.
However, the correction step now applies $SR_i$ instead of $R_i$:
\begin{equation}\label{eq:equivariance}
  f(SR_i,\omega_i) = (SR_i)\ket{v_i^*}
  = S\bigl(R_i\ket{v_i^*}\bigr)
  = S_g \cdot f(R_i,\omega_i),
\end{equation}
where the last equality uses $R_i\ket{v_i^*} \in \mathcal{G}$ so that
$S$ acts on it via $S_g$.

Since $R_i$ is Haar-distributed on $U(N)$, left-invariance gives
$SR_i \sim \mathrm{Haar}(U(N))$, and therefore, conditionally on~$\omega_i$,
\begin{equation}
  f(R_i,\omega_i) \overset{d}{=} f(SR_i,\omega_i) = S_g \cdot f(R_i,\omega_i)
  \qquad \text{for all } S_g \in U(g).
\end{equation}
The only probability measure on $S^{2g-1}$ invariant under \emph{all}
$S_g \in U(g)$ is the uniform (Haar) measure on $S^{2g-1}$.
Hence the conditional distribution of $f(R_i,\omega_i)$ given $\omega_i$
is uniform on the unit sphere in $\mathcal{G} \cong \mathbb{C}^g$.
Because this holds for every realisation of $\omega_i$, the tower
property of conditional expectation implies that the unconditional
distribution of $f(R_i)$ is also uniform on $S^{2g-1}$.

\medskip\noindent\textbf{Step 3: Spanning probability.}
Let $\ket{\phi_i} = \Pi_0 R_i\ket{\tilde\phi_i}/\|\Pi_0 R_i\ket{\tilde\phi_i}\|
\in \mathcal{G}$ be the $i$-th normalised corrected projection.
Because $\ket{\xi_i} \perp \mathcal{G}_i$ and $R_i\mathcal{G}_i = \mathcal{G}$,
we have $R_i\ket{\xi_i} \perp \mathcal{G}$ and therefore
\begin{equation}
  \Pi_0 R_i\ket{\tilde\phi_i} = \alpha_i f(R_i,\omega_i).
\end{equation}
Hence $\ket{\phi_i}$ differs from $f(R_i,\omega_i)$ only by a global
phase, and is itself unconditionally uniform on $S^{2g-1}$.

Since the pairs $(R_i,\omega_i)$ are mutually independent across
branches, for $M \ge g$ the foot-points are independent draws from
this uniform distribution.
The probability that $\ket{\phi_1},\dots,\ket{\phi_M}$ fail to span
$\mathbb{C}^g$ is zero: inductively, at step $i$ the vectors
$\ket{\phi_1},\dots,\ket{\phi_{i-1}}$ span a subspace $V_{i-1}$ of
dimension $\le i{-}1 < g$, and $\Pr[\ket{\phi_i} \in V_{i-1}] = 0$
since $\ket{\phi_i}$ has a continuous density on $S^{2g-1}$ and
$V_{i-1}$ has (real) dimension at most $2(i{-}1) < 2g$.
Hence spanning occurs almost surely.
\end{proof}

\begin{remark}[Finite preparation accuracy]
In practice, the preparation stage achieves overlap $|\alpha_i|^2 = 1 - \varepsilon_i$
with $\varepsilon_i > 0$ small but nonzero.
The corrected states then satisfy
$\|\Pi_0 R_i\ket{\tilde\phi_i}\|^2 = |\alpha_i|^2 + O(\varepsilon_i)$,
and the spanning argument holds as long as the excited-component
contamination is small relative to the ground-state signal.
Any subsequent amplification stage can reduce $\varepsilon_i$ further,
providing a systematic route to the almost-sure spanning limit.
\end{remark}

\begin{remark}[Reachability vs.\ selectability]
\label{rem:reachability}
We recall the definitions from the main text.
The reachability condition,
$\sup_{\ket{\phi}\in\mathcal{R}} |\braket{v}{\phi}|^2 > 0$
for all $\ket{v}\in\mathcal{G}\setminus\{0\}$, asserts that the
preparation family can in principle access every ground-state direction.
A preparation map is \emph{selectable} for $\mathcal{G}$ if there exist
inputs whose outputs $\{\Pi_{\mathcal{G}}\mathcal{P}(x_i)\}_{i=1}^{g}$
span~$\mathcal{G}$.
The proposition above assumes the operational consequence actually needed
in the proof: the chosen preparation--amplification rule returns a state
with nonzero projected ground-space component on each rotated instance.
Reachability helps justify when this assumption is plausible, but it is
not by itself sufficient to guarantee it for an arbitrary optimizer.

To see why reachability alone is insufficient, consider a fully
expressive ansatz (universal gate set, sufficient depth) acting on a
fixed Hamiltonian.
Such an ansatz satisfies the overlap condition trivially, yet an
unrandomized optimizer (e.g.\ gradient descent from a fixed
initialization rule) will converge to the \emph{same} local minimum,
and therefore the same ground-state foot-point, on every invocation.
This is the \emph{selectability failure}: the problem is not that the
ansatz cannot reach other directions, but that the optimizer has no
mechanism to select them.
QRSI restores selectability by rotating the Hamiltonian (or equivalently
the ansatz), so that each branch presents the optimizer with a
different orientation of the ground manifold, causing the fixed
basin-selection rule to land on a different direction.

Conversely, if the ansatz is insufficiently expressive (for example a
shallow circuit that maps only onto a low-dimensional manifold missing
certain ground-state directions, or a symmetry-adapted ansatz whose
symmetry sector excludes some $\ket{v_j}$) then the overlap
condition fails for that direction.  No rotation can repair this
deficiency: $R_i\mathcal{M}$ remains a manifold of the same dimension,
still missing the same number of directions relative to its ambient
space.  The framework thus cleanly separates the ansatz-design question
(reachability, set by the expressive power of the preparation family)
from the ensemble-diversity question (selectability, which QRSI solves
for any reachable ansatz).
\end{remark}

\begin{remark}[Eigenvector structure and accidental selectability]
\label{rem:eigvec_structure}
The selectability failure of unrandomized preparation depends, for
projective filters with random initial states, on the eigenvector
structure of~$H$.
Let $\{\ket{v_j}\}_{j=1}^{g}$ be an orthonormal basis for
$\mathcal{G}$ and let $\ket{e_k}$ denote computational basis states.
The projection of $\ket{e_k}$ onto $\mathcal{G}$,
$\Pi_{\mathcal{G}}\ket{e_k} = \sum_{j=1}^{g}\braket{v_j}{e_k}\ket{v_j}$,
defines a direction in $\mathcal{G}$ that depends on the overlaps
$\braket{v_j}{e_k}$.
After applying a spectral filter (e.g.\ imaginary-time evolution), the
amplified state converges to this projected direction, so the ensemble
built from $M$ independent random seeds $\ket{e_{k_i}}$ spans
$\mathcal{G}$ only if the projected directions are sufficiently diverse.

For Hamiltonians whose ground-space eigenvectors are delocalized and
generic, such as those drawn from a random-matrix ensemble (GOE/GUE),
the overlaps $\braket{v_j}{e_k}$ are themselves pseudo-random,
so different seeds project onto linearly independent ground-space
directions with high probability.
In this regime, projective filters with random seeds achieve
\emph{accidental selectability} without explicit rotation.

By contrast, physically relevant Hamiltonians typically possess
structured eigenvectors, and selectability can fail through two
distinct mechanisms:
\begin{enumerate}
  \item \textbf{Collinear projections.}
    When one ground-state eigenvector is delocalized while the
    remaining $g-1$ are sparse or symmetry-restricted, a single
    eigenvector dominates the projection
    $\Pi_{\mathcal{G}}\ket{e_k}$ for almost every
    computational-basis seed~$k$.
    Different seeds then produce approximately collinear ground-space
    directions, and the ensemble is effectively rank-one
    (see the structured random-Hamiltonian example in
    Sec.~\ref{supp:random_hamiltonian}).
  \item \textbf{Vanishing overlap.}
    Stabilizer-code Hamiltonians (e.g.\ the toric code) have ground
    states that are equal-weight superpositions over syndrome-free
    configurations; the vast majority of computational-basis states
    belong to non-trivial syndrome sectors and satisfy
    $\Pi_{\mathcal{G}}\ket{e_k}=0$ identically.
    For the $2\!\times\!2$ toric code used in the main text, $224$
    out of $256$ basis states have zero ground-space overlap, so
    most branches of an unrotated ensemble converge to excited
    eigenstates rather than the ground space.
    Molecular Hamiltonians with particle-number or spin symmetries
    and frustrated magnets with ground states concentrated in
    specific symmetry sectors behave analogously.
\end{enumerate}
In both cases selectability fails, though the mechanism differs:
collinear projections yield a rank-deficient ground-space ensemble,
while vanishing overlap prevents the ensemble from reaching the
ground space at all.

QRSI eliminates this Hamiltonian-dependent condition:
Proposition~\ref{prop:diversity} guarantees that the rotated ensemble
spans $\mathcal{G}$ almost surely for \emph{any} eigenvector structure,
provided only the reachability condition holds.
The rotation reorients the ground manifold before each projective
filter, converting the structured overlap pattern into a uniformly
random one on every branch.
\end{remark}

\section{Full Proof of Proposition 1b (Diversity under anti-concentration)}
\label{supp:proof:diversity_anti_concentration}

We first recall the foot-point construction and state the
anti-concentration condition precisely, then prove
Proposition~\ref{prop:diversity_anti_concentration}.

\medskip\noindent\textbf{Setup.}
Let $\omega_i$ denote the internal randomness of the $i$-th
preparation--amplification call (e.g.\ random parameter initialization
in VQE, random measurement outcomes in QPE, or random Trotter seeds),
assumed independent of $R_i$.
For fixed~$\omega_i$ the preparation stage is a fixed map
$\mathcal{P}_{\omega_i}: H_i \mapsto \ket{\tilde\phi_i} \in \mathcal{R}$.
Since $\mathcal{G}_i \coloneqq \operatorname{eigenspace}(H_i,E_0) = R_i^\dagger\,\mathcal{G}$, we decompose
\begin{equation}
  \ket{\tilde\phi_i}
  = \alpha_i \ket{v_i^*} + \ket{\xi_i},
  \quad \ket{v_i^*} \in \mathcal{G}_i,
  \quad \ket{\xi_i} \perp \mathcal{G}_i,
  \quad \alpha_i \neq 0.
\end{equation}
The corrected ground-state component
$R_i\ket{v_i^*}$ lies in $\mathcal{G}$, and
since $R_i\ket{\xi_i}\perp\mathcal{G}$ we have
$\Pi_0 R_i\ket{\tilde\phi_i} = \alpha_i R_i\ket{v_i^*}$.
The \emph{foot-point} is
$f(R_i,\omega_i) = R_i\ket{v_i^*}\in S^{2g-1}\subset\mathcal{G}$,
and the $g\times M$ matrix $F$ has entries
$F_{k,i} = \braket{v_k}{f(R_i,\omega_i)}$.
The ensemble spans $\mathcal{G}$ if and only if
$\operatorname{rank}(F) = g$.

\medskip\noindent\textbf{Anti-concentration condition.}
The pair $(\nu,f)$ satisfies $(\eta,\delta)$-anti-concentration
(with $\eta\in(0,1]$, $\delta>0$) if, for almost every
realisation of $\omega$ and every codimension-1 subspace
(hyperplane) $V\subset\mathcal{G}$:
\begin{equation}\label{eq:anticonc_supp}
  \Pr_{R\sim\nu}\bigl[\,d(\hat w(R,\omega),V)\ge\delta\,\bigr]
  \;\ge\;\eta,
\end{equation}
where $\hat w = f(R,\omega)/\|f(R,\omega)\|$ and
$d(\hat w,V) = \arccos\|\Pi_{V}\hat w\|$ is the Fubini--Study
distance from $\hat w$ to the nearest state in $V$.
The parameter $\eta$ controls the probability of escaping any
hyperplane; $\delta>0$ ensures linear independence when the event
occurs.

\begin{proposition}[Diversity under anti-concentration, full
statement]
\label{prop:anticonc_full}
Let $R_1,\dots,R_M$ be drawn independently from $\nu$. If $(\nu,f)$
satisfies $(\eta,\delta)$-anti-concentration, then:

\emph{(a)} With $M=g$ branches:
$\Pr[\operatorname{rank}(F)=g]\ge\eta^g$.

\emph{(b)} With $M = g\lceil\eta^{-g}\log(1/\varepsilon)\rceil$
branches, partitioned into $T = \lfloor M/g\rfloor$ independent
trials of $g$ branches each, the probability that at least one trial
produces a rank-$g$ foot-point matrix satisfies
$\Pr[\exists\,t:\operatorname{rank}(F^{(t)})=g]\ge 1-\varepsilon$,
where $F^{(t)}\in\mathbb{C}^{g\times g}$ is the foot-point matrix
of the $t$-th trial.
\end{proposition}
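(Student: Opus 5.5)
Part (a) will be proved by a sequential (Gram--Schmidt style) argument over the $g$ branches of a single trial, conditioning on the past at each step. Let $\hat w_i = f(R_i,\omega_i)/\|f(R_i,\omega_i)\|$ denote the normalised foot-points. Since $\alpha_i\neq0$ a.s., the $i$-th column of $F$ is $\alpha_i$ times the coordinate vector of $\hat w_i$ up to normalisation. Hence $\operatorname{rank}(F)=g$ if and only if $\hat w_1,\dots,\hat w_g$ are linearly independent. Let $V_{i-1}=\operatorname{span}\{\hat w_1,\dots,\hat w_{i-1}\}$. Define the event $E_i$ as follows: for every hyperplane $V\supseteq V_{i-1}$ chosen measurably from the past, $d(\hat w_i,V)\ge\delta$. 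It is cleaner to fix one hyperplane in advance. On the event that $\dim V_{i-1}=i-1<g$, pick measurably (as a function of $\hat w_1,\dots,\hat w_{i-1}$) a hyperplane $W_{i-1}\subset\mathcal{G}$ containing $V_{i-1}$, and let $E_i=\{d(\hat w_i,W_{i-1})\ge\delta\}$. Since $\delta>0$, the event $E_i$ forces $\hat w_i\notin W_{i-1}\supseteq V_{i-1}$, so $\dim V_i=i$. By induction, $E_1\cap\cdots\cap E_g$ implies $\operatorname{rank}(F)=g$.

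The probability bound uses independence across branches. The pair $(R_i,\omega_i)$ is independent of $(R_j,\omega_j)_{j<i}$, and $W_{i-1}$ is a function of the latter. Conditionally on the past, $W_{i-1}$ is therefore a fixed hyperplane. The anti-concentration condition \eqref{eq:anticonc_supp} holds for every hyperplane and almost every $\omega$, so after integrating over $\omega_i$ it gives $\Pr[E_i\mid\mathcal{F}_{i-1}]\ge\eta$, where $\mathcal{F}_{i-1}$ is the sigma-algebra of the first $i-1$ branches. The tower property then yields $\Pr[E_1\cap\dots\cap E_g]=\mathbb{E}\bigl[\prod_i \mathbf 1_{E_i}\bigr]\ge\eta^g$, peeling off one conditional expectation at a time from $i=g$ down to $i=1$. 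For $i=1$, any fixed hyperplane $W_0$ works.

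The main obstacle is the measurability and uniformity of this conditioning. Anti-concentration is stated for each fixed $V$ and a.e.\ $\omega$, so I must check that it transfers to a data-dependent hyperplane. I would handle this by writing the conditional probability as $\int \Pr_{R\sim\nu}[d(\hat w(R,\omega),V)\ge\delta]\,d\mu(\omega)$ evaluated at $V=W_{i-1}$, using Fubini on the product measure of independent branches. This requires choosing $W_{i-1}$ by a Borel map on the Grassmannian, for example the orthogonal complement of the first standard-basis-derived vector orthogonal to $V_{i-1}$ after Gram--Schmidt.

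Part (b) follows from (a) by independent repetition. The $T=\lfloor M/g\rfloor$ trials use disjoint sets of branches and are therefore independent. Each trial succeeds with probability at least $\eta^g$, so $\Pr[\text{all fail}]\le(1-\eta^g)^T\le e^{-T\eta^g}$. With $M=g\lceil\eta^{-g}\log(1/\varepsilon)\rceil$ we get $T=\lceil\eta^{-g}\log(1/\varepsilon)\rceil$, hence $e^{-T\eta^g}\le\varepsilon$. Finally, if some sub-block $F^{(t)}$ has rank $g$, then the full $F$ also has rank $g$, since its columns contain those of $F^{(t)}$. This also gives the amplification claim of Proposition~\ref{prop:diversity_anti_concentration}.
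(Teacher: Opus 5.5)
Your proposal is correct and follows essentially the same route as the paper: part (a) extends the span of the previous foot-points to a hyperplane chosen from the past, applies anti-concentration conditionally using independence of $(R_i,\omega_i)$, and multiplies to get $\eta^g$; part (b) bounds $(1-\eta^g)^T$ by $\varepsilon$ over disjoint independent trials. Your version is somewhat more careful than the paper's in two places. You make the hyperplane selection measurable and integrate over $\omega_i$ explicitly, and you note that a rank-$g$ sub-block $F^{(t)}$ forces $\operatorname{rank}(F)=g$. The argument itself is the same.
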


\begin{proof}
We work in three steps: a greedy construction for part~(a),
amplification for part~(b), and a discussion of the conditioning.

\medskip\noindent\textbf{Step 1: Greedy spanning (part~a).}
We build a linearly independent set one vector at a time.
At step $\ell = 1,\dots,g$: draw $R_\ell\sim\nu$ independently of
all previous draws, with internal randomness $\omega_\ell$.
Let $V_{\ell-1} = \operatorname{span}(\hat w_1,\dots,\hat w_{\ell-1})$,
a subspace of dimension at most $\ell-1 < g$. Since
$\dim V_{\ell-1} < g$, there exists a hyperplane
$\tilde V_{\ell-1}\supseteq V_{\ell-1}$.
Because $V_{\ell-1}\subseteq\tilde V_{\ell-1}$, we have
$\tilde V_{\ell-1}^\perp\subseteq V_{\ell-1}^\perp$ and therefore
$d(\hat w_\ell, V_{\ell-1})\ge d(\hat w_\ell,\tilde V_{\ell-1})$.
Applying~\eqref{eq:anticonc_supp} to $\tilde V_{\ell-1}$:
\begin{equation}
  \Pr\bigl[d(\hat w_\ell, V_{\ell-1}) \ge \delta\bigr]
  \;\ge\; \Pr\bigl[d(\hat w_\ell, \tilde V_{\ell-1}) \ge \delta\bigr]
  \;\ge\; \eta.
\end{equation}
When $d(\hat w_\ell, V_{\ell-1})\ge\delta > 0$, the vector
$\hat w_\ell$ has a nonzero component orthogonal to $V_{\ell-1}$ and
is therefore linearly independent of $\hat w_1,\dots,\hat w_{\ell-1}$.
Since $R_\ell$ is independent of all previous draws, the conditional
probability at each step is $\ge\eta$ regardless of $V_{\ell-1}$:
\begin{equation}
  \Pr[\text{all } g \text{ steps succeed}]
  = \prod_{\ell=1}^g
    \Pr\bigl[d(\hat w_\ell,V_{\ell-1})\ge\delta
    \;\big|\; V_{\ell-1}\bigr]
  \;\ge\; \eta^g.
\end{equation}

\medskip\noindent\textbf{Step 2: Amplification (part~b).}
Partition $M$ branches into $T = \lfloor M/g\rfloor$ independent
trials of $g$ branches each. Let $F^{(t)}\in\mathbb{C}^{g\times g}$
be the foot-point matrix of the $t$-th trial. By part~(a), each trial
satisfies $\Pr[\operatorname{rank}(F^{(t)})=g]\ge\eta^g$
independently. The probability that all $T$ trials fail is
$\Pr[\forall\,t:\operatorname{rank}(F^{(t)})<g]\le(1-\eta^g)^T$.
Setting $(1-\eta^g)^T\le\varepsilon$ and using
$\log\frac{1}{1-x}\ge x$ for $x\in(0,1)$:
\begin{equation}
  T \;\ge\; \frac{\log(1/\varepsilon)}{\eta^g}
  \qquad\Longrightarrow\qquad
  M = Tg \;\le\;
  g\left\lceil\frac{\log(1/\varepsilon)}{\eta^g}\right\rceil.
\end{equation}

\medskip\noindent\textbf{Step 3: Validity of the conditioning.}
The subspace $V_{\ell-1}$ in Step~1 is random (it depends on
$R_1,\dots,R_{\ell-1}$ and $\omega_1,\dots,\omega_{\ell-1}$).
The conditioning is valid because:
(i)~$R_\ell$ and $\omega_\ell$ are independent of all previous
branches (by the independence assumption);
(ii)~anti-concentration~\eqref{eq:anticonc_supp} holds for
\emph{every} hyperplane $\tilde V_{\ell-1}$, so it applies
regardless of which specific $V_{\ell-1}$ was realised;
(iii)~the choice of extension
$\tilde V_{\ell-1}\supseteq V_{\ell-1}$ may depend on the previous
draws, but since~\eqref{eq:anticonc_supp} is uniform over all
hyperplanes, any choice gives the same bound.
\end{proof}

\subsection{Haar randomness satisfies anti-concentration}
\label{sec:haar_eta}

We verify that Haar-random rotations satisfy
$(\eta,\delta)$-anti-concentration, recovering
Proposition~\ref{prop:diversity} as a special case of
Proposition~\ref{prop:diversity_anti_concentration}.

Under Haar measure, the equivariance $\hat w(SR) = S\hat w(R)$
combined with left-invariance ($SR\sim\mathrm{Haar}$) gives
$S\hat w \stackrel{d}{=} \hat w$ for all $S\in U(g)$, so $\hat w$
is uniform on $S^{2g-1}$. For a uniform unit vector in
$\mathbb{C}^g$, the squared component along any unit direction
$\ket{n}\in\mathcal{G}$ follows a $\mathrm{Beta}(1,g-1)$
distribution:
\begin{equation}
  \Pr\bigl[|\!\braket{n}{\hat w}\!|^2 \ge t\bigr]
  = (1-t)^{g-1}.
\end{equation}
A hyperplane $V\subset\mathcal{G}$ has unit normal $\ket{n_V}$, and
$d_{FS}(\hat w, V) = \arccos\|\Pi_V\hat w\|
= \arcsin|\!\braket{n_V}{\hat w}\!|$. Therefore
$d_{FS}(\hat w,V)\ge\delta$ iff
$|\!\braket{n_V}{\hat w}\!|^2\ge\sin^2\delta$, giving
\begin{equation}\label{eq:haar_eta}
  \eta_{\mathrm{Haar}}(\delta)
  = (1-\sin^2\delta)^{g-1}
  = \cos^{2(g-1)}\!\delta.
\end{equation}
For small $\delta$: $\eta\to 1$. Substituting into
Proposition~\ref{prop:diversity_anti_concentration} with
$\delta\to 0$: $\eta^g\to 1$ and the sample complexity
$M\to g$, recovering the almost-sure spanning of
Proposition~\ref{prop:diversity}.

\subsection{Hierarchy of assumptions and open questions}
\label{sec:hierarchy}

The three approaches to diversity are distinguished by where isotropy
is enforced:

\begin{enumerate}
\item \textbf{Haar randomness}
(Proposition~\ref{prop:diversity}): enforces \emph{output isotropy}
(foot-points uniform on $S^{2g-1}$) via left-invariance of the
measure. The distributional symmetry acts after the preparation, so
no assumptions on the preparation are needed. The cost is
exponential circuit depth.

\item \textbf{$t$-designs}: enforce \emph{input isotropy} (initial
overlaps $\alpha_k(R) = \bra{v_k}R\ket{\psi_0}$ isotropic on
$\mathcal{G}$, via Schur's lemma on the degree-1 polynomial
$\alpha_k$). A 2-design suffices. However, the preparation maps
initial overlaps to foot-points through a non-polynomial process
($RR^\dagger = I$ eliminates all polynomial structure from the
foot-point entries). Input isotropy therefore does not guarantee
output isotropy: the preparation can in principle collapse the
initial spread.

\item \textbf{$(\eta,\delta)$-anti-concentration}
(Proposition~\ref{prop:diversity_anti_concentration}): directly
assumes \emph{output} anti-concentration of the foot-point
distribution. Bypasses the preparation entirely. The cost is that
$\eta$ must be verified or argued per implementation.
\end{enumerate}

\noindent The nesting is: Haar $\Rightarrow$
anti-concentration (with $\eta$ given
by~\eqref{eq:haar_eta}) $\Rightarrow$ diversity. A $t$-design gives
input isotropy but does not imply output anti-concentration without
additional assumptions on the preparation.

\subsubsection{2-designs satisfy foot-point anti-concentration
(non-adaptive case)}

For rotation ensembles forming an approximate 2-design, foot-point
anti-concentration can be established directly via moment matching.
This applies to non-adaptive preparations where the foot-point
coincides with the normalised initial-overlap direction
$\hat{\bm{\alpha}}$.

Let $\ket{n_V}\in\mathcal{G}$ be
    the unit normal to a hyperplane $V\subset\mathcal{G}$. Since
    $\ket{n_V}\in\mathcal{G}$, the component of the initial overlap
    along $\ket{n_V}$ is
    $\braket{n_V}{\bm{\alpha}} = \braket{n_V}{R|\psi_0}$, a
    degree-$(1,1)$ polynomial in $R$. A 1-design matches its second
    moment: $\mathbb{E}[|\!\braket{n_V}{R|\psi_0}\!|^2] = 1/N$. A
    2-design additionally matches the fourth moment:
    $\mathbb{E}[|\!\braket{n_V}{R|\psi_0}\!|^4] = 2/(N(N\!+\!1))$
    (since $|\!\braket{n_V}{R|\psi_0}\!|^4$ has degree~$(2,2)$).
    The Paley--Zygmund inequality states that for a non-negative
    random variable $X$ with $\mathbb{E}[X]>0$ and any
    $\theta\in(0,1)$:
    $\Pr[X\ge\theta\,\mathbb{E}[X]]
    \ge (1-\theta)^2\,(\mathbb{E}[X])^2/\mathbb{E}[X^2]$.
    Applying this with $X = |\!\braket{n_V}{R|\psi_0}\!|^2$,
    $\mathbb{E}[X] = 1/N$, $\mathbb{E}[X^2] = 2/(N(N\!+\!1))$:
    \begin{equation}\label{eq:PZ_bitstring}
      \Pr\bigl[|\!\braket{n_V}{R|\psi_0}\!|^2
        \ge \theta/N\bigr]
      \;\ge\; (1-\theta)^2\,
      \frac{(1/N)^2}{2/(N(N\!+\!1))}
      \;=\; (1-\theta)^2\,\frac{N+1}{2N}.
    \end{equation}
    Since $\|\bm{\alpha}\|^2\le 1$ (projection of a unit vector),
    the normalised component satisfies
    $|\!\braket{n_V}{\hat{\bm{\alpha}}}\!|^2
    \ge |\!\braket{n_V}{R|\psi_0}\!|^2$, so
    \begin{equation}
      \Pr\bigl[d_{FS}(\hat{\bm{\alpha}},V)
        \ge \arcsin\!\sqrt{\theta/N}\,\bigr]
      \;\ge\; (1-\theta)^2\,\frac{N+1}{2N}.
    \end{equation}
    Setting $\theta=1/2$: the initial-overlap direction satisfies
    $(\eta,\delta)$-anti-concentration with
    $\eta \ge \tfrac{1}{8}$ and
    $\delta = \arcsin\!\sqrt{1/(2N)} > 0$, both uniform over all
    hyperplanes $V$. By
    Proposition~\ref{prop:diversity_anti_concentration}:
    \begin{equation}\label{eq:2design_bound}
      \Pr[\operatorname{rank}(\tilde F) = g]
      \;\ge\; 8^{-g},
    \end{equation}
    an $N$-independent bound depending only on the degeneracy.
    For the toric code ($g=4$): $\ge 1/4096$ per trial, amplifiable
    to $1-\varepsilon$ with $M\le 4\lceil 4096\log(1/\varepsilon)\rceil$
    branches.

The bound is conservative: the normalization
$\|\bm{\alpha}\|^2\le 1$ is loose (typically
$\|\bm{\alpha}\|^2\approx g/N\ll 1$), so the actual
anti-concentration is likely much stronger, but tightening requires
controlling the non-polynomial ratio
$\bm{\alpha}/\|\bm{\alpha}\|$.

\subsubsection{Open questions}

\begin{enumerate}
  \item \emph{Adaptive preparations.}
    For adaptive preparations (VQE, adiabatic evolution), the
    foot-point $\hat w$ differs from $\hat{\bm{\alpha}}$ due to the
    non-polynomial preparation map. The 2-design result above
    guarantees that the \emph{input} to the preparation is isotropic;
    whether the \emph{output} (foot-point) inherits this depends on
    the preparation dynamics. Making this rigorous for a concrete
    optimizer requires showing that the basins of attraction are
    aligned with the ground-state components.

  \item \emph{Quantum-sampling-advantage circuits.}
    The proof above uses the 2-design moment-matching property, not
    the output anti-concentration results proven for specific QSA
    circuit families (IQP~\cite{bremner2016average}, random
    circuits~\cite{dalzell2022random}). These QSA results establish a
    different property: for a \emph{fixed} circuit, the output
    amplitudes are spread over computational-basis states. Whether
    this per-circuit bitstring spread implies foot-point
    anti-concentration (spread over the $g$-dimensional ground space)
    is an open question, relevant to practical implementations where
    the rotation circuits are drawn from a specific QSA-type
    architecture rather than an idealised 2-design.
\end{enumerate}

\section{Full Proof of Proposition 2 (Spectral Invariance)}
\label{supp:proof:overlap}

Proposition~\ref{prop:spec} of the main text states that conjugation by any $R_i\in U(N)$
preserves the spectrum and the spectral gap.
Here we prove the full statement and record two further consequences for the QRSI framework.

\begin{proposition}[Spectral invariance, full statement]
Let $R_i \in U(N)$ act on the physical subspace as in
Eq.~\eqref{eq:blockR}.
Then:
\begin{enumerate}
  \item[(a)] $\operatorname{spec}(H_i) = \operatorname{spec}(H)$
    for every $i$, so the spectral gap satisfies $\gamma_i = \gamma$.
  \item[(b)] Running a branch on $H_i$ and correcting by $R_i$ is
    unitarily equivalent to running the corresponding branch on
    $H$ with the rotated reachable set $R_i\mathcal{R}$.
  \item[(c)] QRSI therefore introduces no intrinsic spectral penalty:
    any branchwise performance variation is inherited from the chosen
    preparation or amplification primitive rather than from the
    randomization step itself.
\end{enumerate}
\end{proposition}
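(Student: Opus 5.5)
The plan is to prove the three parts in order, since (b) and (c) both rest on (a) and on elementary properties of unitary conjugation.

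For (a), note that $R_i$ is unitary, whether it is block-diagonal as in Eq.~\eqref{eq:blockR} or a full element of $U(N)$. Hence $H_i=R_i^\dagger H R_i = R_i^{-1}HR_i$ is a similarity transform of $H$, and the two matrices have the same characteristic polynomial, $\det(H_i-\lambda I)=\det(R_i^{-1}(H-\lambda I)R_i)=\det(H-\lambda I)$. The spectra therefore coincide, with algebraic multiplicities. Because both matrices are Hermitian, geometric and algebraic multiplicities agree, and the eigenspaces map explicitly as $\operatorname{eigenspace}(H_i,E)=R_i^\dagger\operatorname{eigenspace}(H,E)$: if $H\ket{v}=E\ket{v}$, then $H_iR_i^\dagger\ket{v}=ER_i^\dagger\ket{v}$. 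The ordered eigenvalue lists $E_0=\dots=E_{g-1}<E_g\le\dots$ are therefore identical. The degeneracy $g$ is preserved, and $\gamma_i=E_{g+1}-E_0=\gamma$ (with the paper's indexing). In the padded case, the same argument applies blockwise, because $R_i$ fixes the penalty block.

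For (b), the key is to make precise what "running a branch" means. I would model a primitive as a map that depends on $H$ only through operations that transform covariantly under unitary conjugation: applying functions $p(H)$ (filters, $e^{-\tau H}$, $(H-\sigma)^{-q}$, time evolution) and evaluating expectation values $\bra{\phi}H\ket{\phi}$ on states drawn from $\mathcal{R}$. The identities used are $p(R_i^\dagger HR_i)=R_i^\dagger p(H)R_i$ (check this on monomials, then extend via the spectral theorem to any Borel function) and $\bra{\phi}H_i\ket{\phi}=\bra{R_i\phi}H\ket{R_i\phi}$. With these, every intermediate state in the $H_i$-run starting from $\ket{\phi}\in\mathcal{R}$ equals $R_i^\dagger$ applied to the corresponding state of the $H$-run starting from $R_i\ket{\phi}\in R_i\mathcal{R}$. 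Any optimizer sees identical cost landscapes, since the parameters are the same and the energies are equal. Applying the final correction $R_i$ then gives exactly the state-picture output, branch by branch and for each fixed internal randomness $\omega_i$. I would formalize this by induction over the steps of the primitive.

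Part (c) is then a corollary. Any spectral quantity governing the cost or accuracy of the primitive (the gap, the norm $\|H\|$, the eigenvalue distribution entering polynomial-approximation degrees, and $\varepsilon_q$ bounds such as $e^{-2q\gamma}$) is identical on every branch by (a). By (b), the only difference between branches is the seed set $R_i\mathcal{R}$, which is a property of the primitive's input rather than of the spectrum. Consequently, any variation in performance comes from the primitive.

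I expect the main obstacle to be (b). Its claim is only as precise as the definition of a primitive, and some implementations break exact covariance, for example Trotterized or Pauli-decomposed evolution, where $R_i^\dagger HR_i$ has a different term structure. I would state the result for the idealized, basis-independent primitive and note that implementation-level costs, such as sparsity destruction, are handled separately, as the paper already does.
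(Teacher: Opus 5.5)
Your proposal is correct and follows the same route as the paper: unitary similarity for (a), the equivalence of the Hamiltonian- and state-rotation pictures for (b), and (c) as a corollary of the two. The only real difference is in (b). The paper merely invokes that equivalence, whereas your covariance argument ($p(R_i^\dagger H R_i)=R_i^\dagger p(H)R_i$ and $\langle\phi|H_i|\phi\rangle=\langle R_i\phi|H|R_i\phi\rangle$, propagated step by step at fixed $\omega_i$) actually proves it, and it makes explicit the basis-independence hypothesis on the primitive that the paper's statement needs but leaves implicit; the paper itself notes, for instance, that rotation can worsen the QMC sign problem.
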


\begin{proof}
\medskip\noindent\textbf{Part (a).}
$H_i = R_i^\dagger H R_i$ is unitarily similar to
$H$, so they share the same spectrum.
In particular $E_0^{(i)} = E_0$, $E_{g+1}^{(i)} = E_{g+1}$, and
$\gamma_i = \gamma$.

\medskip\noindent\textbf{Part (b).}
The unitary equivalence of the Hamiltonian-rotation and
state-rotation pictures implies that any branch executed on $H_i$ can
be rewritten as a branch executed on the fixed Hamiltonian $H$ with
the rotated reachable set $R_i\mathcal{R}$.
Since $R_i$ is unitary, this transformation preserves distances,
subspace dimensions, and overlap relations.

\medskip\noindent\textbf{Part (c).}
Part (a) shows that the randomization changes neither the spectrum nor
the gap.
Part (b) shows that the two pictures are exactly equivalent.
Hence any performance difference across branches must arise from how the
chosen primitive responds to its input state or reachable set, not from
any additional spectral burden introduced by QRSI.
\end{proof}

\section{Geometric Interpretation on \texorpdfstring{$\mathbb{C}P^{N-1}$}{}}
\label{supp:geometry}

The main text presents QRSI in terms of the Fubini--Study metric on
complex projective space.
Here we develop this perspective, explaining why parameter perturbation
is geometrically insufficient for diversity and how random rotations
exploit the isometry group of $\mathbb{C}P^{N-1}$.

\paragraph{Setup.}
The space of physically distinct $n$-qubit pure states is
$\mathbb{C}P^{N-1} = (\mathbb{C}^N\!\setminus\!\{0\})/{\sim}$
($N=2^n$), with the Fubini--Study geodesic distance
\begin{equation}\label{eq:fs_finite}
  d_{\mathrm{FS}}([\psi],[\phi])
  = \arccos\bigl(|\!\braket{\psi}{\phi}\!|\bigr),
\end{equation}
ranging from $0$ (identical) to $\pi/2$ (orthogonal).
This is the unique (up to scale) $U(N)$-invariant Riemannian metric on
$\mathbb{C}P^{N-1}$~\cite{Provost_1980,Gibbons_1992,Brody_2001,Cocchiarella_2020},
and its infinitesimal form is proportional to the quantum Fisher
information~\cite{Scali_2019}.

A continuously parameterised preparation family
$\varphi:\mathbb{R}^p\to\mathbb{C}P^{N-1}$,
$\boldsymbol\theta\mapsto[U(\boldsymbol\theta)\ket{0}^{\otimes n}]$,
traces a submanifold $\mathcal{M}$ of real dimension
$r\le\min(p,2(N{-}1))$.
(The tangent/normal analysis below uses the smooth structure;
as noted in the main text, the overlap-preservation results require
only that $\mathcal{M}$ is measurable.)

\paragraph{Why parameter perturbation fails.}
\label{sec:perturbation_geom}
The $g$-fold degenerate ground eigenspace forms the sub-manifold
$\mathbb{C}P(\mathcal{G})\cong\mathbb{C}P^{g-1}
\hookrightarrow\mathbb{C}P^{N-1}$
of real dimension $2(g{-}1)$.
At a converged point $[\psi^\star]\in\mathcal{M}$, parameter
perturbation explores the tangent space
$T_{[\psi^\star]}\mathcal{M}$, a subspace of dimension $\le r$
inside the ambient
$T_{[\psi^\star]}\mathbb{C}P^{N-1}\cong\{\ket{\phi}:
\braket{\psi^\star}{\phi}=0\}$ of real dimension $2(N{-}1)$.
Reaching a different foot-point on $\mathbb{C}P(\mathcal{G})$
requires a component in the normal space
$N_{[\psi^\star]}\mathcal{M}
= T_{[\psi^\star]}\mathbb{C}P^{N-1}
\ominus T_{[\psi^\star]}\mathcal{M}$.
The $g{-}1$ directions connecting distinct foot-points generically
lie in $N_{[\psi^\star]}\mathcal{M}$: the projection of
$T_{[\psi_0^\star]}\mathbb{C}P(\mathcal{G})$ onto
$T_{[\psi^\star]}\mathbb{C}P^{N-1}$ generically has trivial
intersection with $T_{[\psi^\star]}\mathcal{M}$ whenever
$r + 2(g{-}1) < 2(N{-}1)$, which holds for any practical ansatz.
Parameter perturbation is therefore geometrically locked to a single
foot-point.
More generally, for any measurable $\mathcal{M}$ the optimizer's
basin of attraction selects a single point $[\psi^\star]$, leaving
the remaining directions in $\mathbb{C}P(\mathcal{G})$ unexplored.

\paragraph{Random rotations as Fubini--Study isometries.}
\label{sec:isometries}
The action $R\cdot[\psi]=[R\psi]$ preserves $d_{\mathrm{FS}}$, so
every $R\in U(N)$ is an isometry of $\mathbb{C}P^{N-1}$.
This has two consequences for QRSI.

\emph{(i) Overlap preservation.}
In the state-rotation picture, QRSI replaces $\mathcal{M}$ by
$R_i\mathcal{M}$.
Since $R_i$ is an isometry,
\begin{equation}\label{eq:dist_pres}
  \min_{[\phi]\in R_i\mathcal{M}}
  d_{\mathrm{FS}}\bigl([\phi],\,\mathbb{C}P(\mathcal{G})\bigr)
  =
  \min_{[\phi]\in\mathcal{M}}
  d_{\mathrm{FS}}\bigl([\phi],\,\mathbb{C}P(\mathcal{G}_i)\bigr),
\end{equation}
where $\mathcal{G}_i = R_i^\dagger\mathcal{G}$.
This identifies the correct target manifold for the rotated instance
but does \emph{not} assert a branch-independent distance to
$\mathbb{C}P(\mathcal{G})$.
At the state level, let $[\tilde\phi_i]\in\mathcal{M}$ be the state
found for $H_i$ with foot-point
$[\tilde w_i]\in\mathbb{C}P(\mathcal{G}_i)$.
After basis correction
$[\phi_i]=[R_i\tilde\phi_i]$,
$[w_i]=[R_i\tilde w_i]\in\mathbb{C}P(\mathcal{G})$, so
$d_{\mathrm{FS}}([\phi_i],[w_i])
= d_{\mathrm{FS}}([\tilde\phi_i],[\tilde w_i])$
and $\|\Pi_{\mathcal{G}}\ket{\phi_i}\|^2
= \|\Pi_{\mathcal{G}_i}\ket{\tilde\phi_i}\|^2$:
overlap is preserved within each branch.
The guarantee requires $R_i$ inside the preparation loop;
a naive serial rotation $R_i$ applied \emph{after} preparation
erases overlap by Haar left-invariance
(Sec.~\ref{supp:serial}).

\emph{(ii) Diversity from transitivity.}
$U(N)$ acts transitively on $\mathbb{C}P^{N-1}$, so the corrected
foot-point $[w_i]\in\mathbb{C}P(\mathcal{G})$ changes with each
$R_i$.
For Haar-random rotations, the foot-points are independently and
uniformly distributed on
$\mathbb{C}P(\mathcal{G})\cong\mathbb{C}P^{g-1}$; for $M\ge g$
they span $\mathbb{C}^g$ almost surely
(Proposition~\ref{prop:diversity}).

\begin{figure}[htbp]
  \centering
  \includegraphics[width=0.5\textwidth]{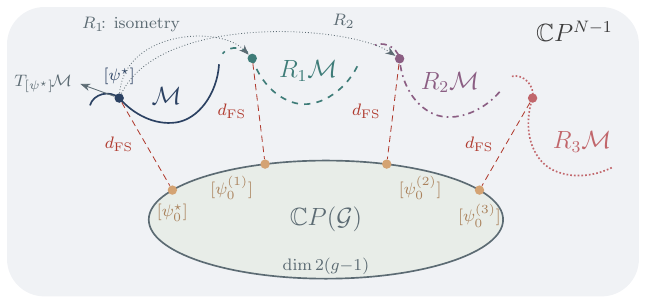}
  \caption{\textbf{Diversity mechanism on $\mathbb{C}P^{N-1}$
    (state-rotation picture).}
    The preparation family $\mathcal{M}$ (solid, dark blue) and its
    isometric copies $R_i\mathcal{M}$ (dashed/dotted) under Haar-random
    rotations~$R_i$.
    Spectral invariance (Proposition~\ref{prop:spec}) guarantees that
    each branch achieves $O(1)$ overlap with $\mathcal{G}$.
    Each copy's optimum projects to a distinct foot-point
    $[\psi_0^{(i)}]\in\mathbb{C}P(\mathcal{G})$ (orange dots),
    and for $M \ge g$ these span~$\mathcal{G}$ almost surely
    (Proposition~\ref{prop:diversity}).
    The Fubini--Study distances $d_{\mathrm{FS}}^{(i)}$ from each
    optimum to its foot-point are generically different across
    branches, reflecting the distinct orientations of $R_i\mathcal{M}$
    relative to $\mathbb{C}P(\mathcal{G})$.
    The tangent-space arrow at $[\psi^\star]$ shows why repeated runs
    of the same preparation family cannot change the foot-point.}
  \label{fig:geometry}
\end{figure}

\section{Why Serial Preparation and Rotation Fails}
\label{supp:serial}

One might attempt to build the ensemble by alternating state preparation
and Haar rotation along a chain: prepare a trial state, rotate the
output, feed the rotated state into a new preparation, rotate again,
and harvest the successive outputs.
However, it is easy to show that this serial scheme cannot jointly
achieve high overlap and high diversity.

Let $G$ be a compact group acting unitarily on $\mathcal{H}$, and
suppose the preparation algorithm produces a state
$\ket{\tilde\psi} = U^*\ket{0}$ with $U^* \in G$.
By left-invariance of the Haar measure, for any fixed $U^* \in G$,
\begin{equation}\label{eq:haar_inv_supp}
  R_i\,U^* \;\sim\; \mathrm{Haar}(G)
  \quad\text{whenever}\quad R_i \sim \mathrm{Haar}(G).
\end{equation}
Consequently, $R_i\ket{\tilde\psi} = R_i U^*\ket{0}$ is distributionally
identical to $R_i\ket{0}$.
In other words, the Haar rotation \emph{erases} any ground-space
overlap that was built by the preparation stage, returning the
effective overlap to the random baseline $O(g/N)$:
\[
  \|\Pi_\mathcal{G}\,R_i\ket{\tilde\psi}\|^2
  \;=\; \|\Pi_\mathcal{G}\,R_i\ket{0}\|^2
  \;\sim\; O(g/N).
\]
Serially alternating preparation and rotation therefore cannot jointly
achieve high overlap and high diversity: state preparation provides
overlap, but the subsequent rotation destroys it; conversely, the
rotation provides diversity, but only at the Haar floor.
This is precisely the diversity / overlap tradeoff, and the serial
scheme does not escape it.

QRSI places the rotation \emph{inside} the
preparation--amplification loop.
Each branch sees a rotated Hamiltonian $H_i = R_i^\dagger H R_i$ and
re-optimizes from scratch; preparation therefore delivers $O(1)$
overlap ab initio on every branch.
In the state-rotation picture, the reachable set $\mathcal{R}$ is
rotated to $R_i\mathcal{R}$ before preparation begins, so the
manifold is translated as a whole and the overlap, as a Fubini-Study
distance, is preserved.
The rotation injects diversity by reorienting the target manifold,
while preparation restores overlap on the reoriented instance: the two
ingredients cooperate rather than compete.

\section{Practical Implementation Considerations}
\label{supp:practical_challenges}

We collect here the per-branch implementation costs for the two
QRSI pictures. QRSI inherits the cost of whichever preparation
primitive is selected and does not amplify it beyond the $M$
parallel invocations counted in the total $M\cdot C_{\mathrm{prep}}$.
The two pictures (Hamiltonian-rotation and state-rotation) impose
qualitatively different constraints on the implementation.

\subsection{Hamiltonian-rotation picture: sparsity destruction}

While the Hamiltonian-rotation picture is mathematically clean, it
faces an implementation obstacle. A Hamiltonian with a sparse or
structured Pauli decomposition $H = \sum_\alpha h_\alpha P_\alpha$
generically loses that structure under conjugation:
$R_i^\dagger P_\alpha R_i$ is a dense unitary, so $H_i$ has $O(4^n)$
nonzero Pauli coefficients even when $H$ had $O(\mathrm{poly}(n))$.
For a Haar-random $R_i$, this sparsity destruction occurs almost
surely, making the rotated Hamiltonian expensive to simulate (whether
by Trotterization, linear combination of unitaries, or quantum signal
processing). Structured alternatives (e.g.\ Givens products,
Householder reflections, random permutations) can preserve more
structure at the cost of weaker randomization; see
Sec.~\ref{supp:rotations}.

\subsection{State-rotation picture: per-primitive cost}

In the state-rotation picture the Hamiltonian is untouched, but the
preparation primitive carries a per-branch cost.

\paragraph{Variational methods (VQE)} inherit any per-call cost of
the chosen ansatz, including the exponential concentration phenomena
(e.g.\ barren plateaus) characteristic of deep or highly expressive
circuits~\cite{mcclean2018barren,larocca2022diagnosing}; QRSI does
not amplify this cost beyond the $M$ branches already counted in the
total $M\cdot C_{\mathrm{prep}}$.

\paragraph{Coherent methods (QPE, QSVT)} provide overlap
amplification at the cost of deep circuits and post-selection; the
per-branch overhead scales as $O(\mathrm{poly}(\log N)/\delta)$.

\paragraph{Projective methods} (imaginary-time
evolution~\cite{Foulkes2001,McArdle2019,Motta2020},
FCIQMC~\cite{booth2009fciqmc,Cleland2010,huggins2022unbiasing,Kanno2023}
or QCQMC~\cite{zhang_2025_qcqmc,Buonaiuto2026}) require convergence
time $O(\gamma^{-1}\log(N/g))$ per branch; moreover, random rotations
can introduce complex Hamiltonian matrix elements that exacerbate
the quantum Monte Carlo sign problem~\cite{booth2009fciqmc}.

\paragraph{Adiabatic methods} must traverse the spectral gap for
each branch, at a cost scaling inversely with the minimum gap along
the adiabatic path. Note that random rotations preserve the
ground-state gap (Proposition 2 in the main text) but not necessarily
the minimum gap along the adiabatic interpolation path.

\medskip

\noindent
QRSI does not avoid these primitive-specific costs, but it ensures
that each invocation operates at $O(1)$ overlap. The total resource
budget is $M \cdot C_{\mathrm{prep}}$ branches (with $M = g$ under
Haar, or $M = O(g\,\eta^{-g}\log(1/\varepsilon))$ under
anti-concentration), plus $O(NM^2)$ for the classical SVD of the
coefficient matrix or $O(M^2/\varepsilon^2)$ randomized-measurement
shots for the Gram matrix on hardware (see Sec.~\ref{supp:cost}).

\section{Structured Rotation Alternatives}
\label{supp:rotations}

When $N$ is large, sampling a full Haar unitary on $U(N)$ may be
prohibitive (cost $O(N^2)$ parameters).
Three structured alternatives provide efficient approximations.

\paragraph{Random Givens rotations.}
A product of $L = O(N \log N)$ random Givens rotations,
\begin{equation}
  R^{(\mathrm{phys})} = \prod_{l=1}^{L} G_{i_l j_l}(\theta_l),
  \quad G_{ij}(\theta) = \mathds{1} - (1-\cos\theta)(e_{ii}+e_{jj})
    + \sin\theta(e_{ij} - e_{ji}),
\end{equation}
forms an approximate 2-design on $U(N)$.
In the state-rotation picture, each Givens factor corresponds to a
single two-qubit gate, so the circuit overhead is $L$ additional
two-qubit gates.
For $n = 10$ qubits ($N \le 1024$), $L \approx 10N$ gates suffices to
approximate the Haar measure to practical precision.

\paragraph{Householder reflections.}
A product of $k$ Householder reflections
$R = \prod_{l=1}^{k}(\mathds{1} - 2\bm{u}_l\bm{u}_l^\dagger)$,
$\bm{u}_l \in \mathbb{C}^N$ unit vectors, provides a tunable
depth--randomness trade-off.
For $k \ge N$, the distribution approaches Haar~\cite{halko2011finding};
for small $k$ the rotation explores a lower-dimensional submanifold but
at reduced cost.

\paragraph{Random permutation matrices.}
Permutation matrices $R = P_\sigma$ for $\sigma \in S_N$ are the
cheapest option (one classical lookup per ensemble member) and require
zero additional circuit depth in the state-rotation picture.
However, they are restricted to a discrete set of $N!$ unitaries
and do not form a continuous design; in practice they can be effective
for small $g$ but may require oversampling ($M \gg g$) to achieve
reliable spanning.

The spanning guarantee of Proposition 1 holds for any distribution over
$U(N)$ that is absolutely continuous with respect to the Haar measure
(which includes the first two alternatives but not permutations in general).

\paragraph{Anti-concentration and practical circuits.}
By Proposition~\ref{prop:diversity_anti_concentration}, any rotation
ensemble satisfying $(\eta,\delta)$-anti-concentration of the
foot-point distribution guarantees diversity with
$M\le g\lceil\eta^{-g}\log(1/\varepsilon)\rceil$ total branches.
This condition is strictly weaker than Haar uniformity and does not
require the rotation ensemble to be a $t$-design. The three
structured alternatives above can be assessed through their
foot-point anti-concentration parameter $\eta$; see
Sec.~\ref{sec:hierarchy} for the detailed analysis.

\section{Concrete Preparation Primitives}
\label{supp:primitives}

The QRSI framework is agnostic to the specific state-preparation
method plugged into each branch.
Figure~\ref{fig:schematic} shows the practical pipeline that wraps any
such primitive, and Table~\ref{tab:primitives} collects the key
properties: the resolvent-power parameter~$q$, the natural rotation
picture, and the per-step cost on both classical and quantum platforms.
Every iterative primitive listed admits a quantum realisation in which
each step is one query to a block encoding of~$H$, implementable via
LCU~\cite{ChildsWiebe2012} or
QSP/QSVT~\cite{LowChuang2017,Gilyen2019} at
$O(\mathrm{poly}(\log N))$ ancilla and depth per
query~\cite{BerryChildsKothari2015}.

\begin{figure}[h]
    \centering
    \includegraphics[width=0.5\textwidth]{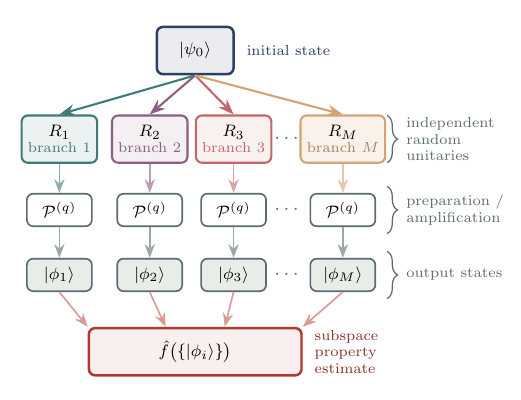}
    \caption{\textbf{Practical schematic of QRSI}. A single initial state
    $\ket{\psi_0}$ is distributed across $M$ independent branches,
    with each `branch' generated by a different instance of a random unitary
    $R_i$. Every branch undergoes the same preparation/amplification
    primitive $\mathcal{P}^{(q)} =: \mathcal{P}^{(q)}_{\omega_i}$
    (e.g.\ imaginary-time evolution, VQE, or adiabatic preparation)
    with $q$ resolvent steps, producing output states $|\phi_i\rangle$
    that concentrate on the target subspace. A post-processing function
    $\hat{f}$ estimates a desired subspace property from the collected states.}
    \label{fig:schematic}
\end{figure}

\begin{table}[h]
  \centering
  \renewcommand{\arraystretch}{1.3}
  \setlength{\tabcolsep}{4pt}
  \begin{tabular}{p{2.8cm}ccclp{3.4cm}}
    \toprule
    \textbf{Primitive}
      & \textbf{Type}
      & \textbf{Picture}
      & $q$
      & \textbf{Classical cost / step}
      & \textbf{Quantum implementation} \\
    \midrule
    Imaginary-time evol.
      & Iterative & Either & \# Trotter steps
      & $O(\gamma^{-1})$
      & QITE~\cite{Motta2020} \\
    VQE / QAOA
      & Iterative & Either & \# opt.\ iterations
      & $c_{\mathrm{eval}}$
      & one $\langle H\rangle$ measurement \\
    Polynomial filter
      & Iterative & State & \# mat-vec products
      & $O(1)$ mat-vec
      & block-encoding query (LCU)~\cite{ChildsWiebe2012} \\
    Chebyshev (KPM)
      & Iterative & State & Polynomial degree $k$
      & $O(1)$ mat-vec
      & QSP / QSVT~\cite{LowChuang2017,Gilyen2019} \\
    Krylov / Lanczos
      & Iterative & State & Krylov dim.\ $m$
      & $O(1)$ mat-vec
      & quantum Krylov~\cite{Stair_2019,Cortes_2022} \\
    Adiabatic
      & Iterative & Hamiltonian & \# time slices
      & $O(\gamma_{\min}^{-1})$
      & Trotterized real-time evol. \\
    QPE
      & Single-shot & Either & $q{=}1$
      & n/a
      & $O(\mathrm{poly}(\log N)/\delta)$~\cite{Kitaev1995,Abrams1999,PoulinWocjan2009} \\
    QSVT filter
      & Single-shot & Either & $q{=}1$
      & n/a
      & $O(\mathrm{poly}(\log N)/\delta)$~\cite{Lin2020,Dong2022} \\
    FCIQMC
      & Stochastic & Either & \# MC sweeps
      & walker propagation
      & QCQMC~\cite{zhang_2025_qcqmc,Buonaiuto2026} \\
    \bottomrule
  \end{tabular}
  \caption{Preparation primitives and their QRSI mapping.
  ``Picture'' indicates the most natural rotation picture
  (Hamiltonian-rotation or state-rotation); ``Either'' means both
  are equally convenient.
  $\gamma$ is the spectral gap, $\gamma_{\min}$ the minimum gap
  along the adiabatic path, and $\delta$ the eigenvalue resolution.
  QPE and QSVT filtering are intrinsically quantum.}
  \label{tab:primitives}
\end{table}

\paragraph{Spectral-filter methods.}
The polynomial (power) filter
$|\psi_i\rangle \propto (\mu I - H)^q R_i|\psi_0\rangle$
is the most literal quantum transcription of classical
RSI~\cite{halko2011finding}, with convergence ratio
$(\gamma/\|H\|)^q$ per step.
The Chebyshev spectral filter~\cite{Jackson_1912,Weisse_2006}
replaces the monomial by $T_k(\tilde H)$, where $\tilde H$ is the
affinely rescaled Hamiltonian, achieving exponential ground-space
amplification once $k\gtrsim\sqrt{\|H\|/\gamma}$.
The Krylov/Lanczos method builds the subspace
$\mathcal{K}_m(H,R_i|\psi_0\rangle)$ and adaptively finds the
optimal degree-$(m{-}1)$ polynomial.
All three are most natural in the \emph{state-rotation} picture,
where $H$ stays fixed and $R_i$ enters only through the probe seed.
On quantum hardware they map to $O(q)$ block-encoding queries via
LCU or QSVT.

\paragraph{Imaginary-time and variational methods.}
Imaginary-time evolution
$|\psi_i(\beta)\rangle\propto e^{-\beta H_i}|\psi_0\rangle$
amplifies the ground-space component by $e^{\beta\gamma}$, with each
Trotter slice being one resolvent power.
Variational algorithms (VQE, QAOA) count optimizer iterations as
resolvent powers.
Both work in either rotation picture; the state-rotation picture is
generally preferred because it preserves the sparsity of~$H$.

\paragraph{Adiabatic preparation.}
Adiabatic state preparation interpolates
$H_i(s) = (1-s)H_0 + s\,R_i^\dagger H R_i$ from a trivial
Hamiltonian~$H_0$ to the target.
This is the only primitive for which the \emph{Hamiltonian-rotation}
picture is unavoidable, since the adiabatic endpoint depends on~$R_i$.

\paragraph{Picture selection rule.}
The state-rotation picture should be the default: it keeps $H$ fixed,
preserving its sparse Pauli decomposition.
The Hamiltonian-rotation picture is required only when the primitive's
target Hamiltonian must change with~$R_i$ (adiabatic preparation).
QRSI's formal guarantees hold identically in both pictures.

\paragraph{Excited-state targeting.}
\label{supp:prim:excited}
QRSI extends to any eigenspace by modifying the spectral filter.
(i)~\emph{Shift-and-invert:} applying
$(H-\sigma I)^{-q}R_i|\psi_0\rangle$ amplifies the eigenvalue
closest to $\sigma$ by $|E_k-\sigma|^{-q}$; sweeping $\sigma$ turns
QRSI into a spectral microscope that reads off each level's
degeneracy from the SVD gap.
(ii)~\emph{Folded spectrum:} replacing $H$ by
$\tilde{H}=(H-\sigma I)^2$ reduces excited-state targeting to a
ground-state problem at the cost of doubling the per-step Hamiltonian
applications.
(iii)~\emph{Chebyshev band-pass filters}~\cite{Weisse_2006} and
\emph{symmetry-sector projections} $H_S = P_S H P_S$ offer additional
routes; the latter is especially useful when conserved quantum numbers
(particle number, total spin) are known.
All four strategies leave the QRSI pipeline structurally intact;
Proposition~\ref{prop:diversity} and all convergence guarantees carry
over unchanged.

\section{Quantitative SVD Gap of the Coefficient Matrix}
\label{supp:svd_gap}

We derive the bound~\eqref{eq:svd_gap_main} stated in the main text.
Write the coefficient matrix as
$C = C_\parallel + C_\perp$, where
$C_\parallel = \Pi_{\mathcal{G}}\,C$ collects the ground-space
components and $C_\perp = \Pi_{\mathcal{G}^\perp}\,C$ collects the
excited-state leakage.  Since $\Pi_{\mathcal{G}}$ has rank~$g$, the
matrix $C_\parallel$ has rank at most~$g$, and therefore
$\sigma_{g+1}(C_\parallel) = 0$.

\paragraph{Upper bound on $\sigma_{g+1}(C)$.}
By Weyl's perturbation inequality for singular
values~\cite{Stewart_1990},
$|\sigma_j(A+B) - \sigma_j(A)| \le \|B\|$ for any matrices $A,B$.
Setting $A = C_\parallel$ and $B = C_\perp$:
\begin{equation}\label{eq:sigma_g1_bound}
  \sigma_{g+1}(C) \;\le\; \sigma_{g+1}(C_\parallel) + \|C_\perp\|
  \;=\; \|C_\perp\|.
\end{equation}
Each column of $C_\perp$ is
$\Pi_{\mathcal{G}^\perp}\ket{\psi_i}$ with
$\|\Pi_{\mathcal{G}^\perp}\ket{\psi_i}\|^2 \le \varepsilon_q$, so
$\|C_\perp\|_F^2 = \sum_{i=1}^{M}
\|\Pi_{\mathcal{G}^\perp}\ket{\psi_i}\|^2 \le M\varepsilon_q$.
Since $\|C_\perp\| \le \|C_\perp\|_F$, we obtain
$\sigma_{g+1}(C) \le \sqrt{M}\,\varepsilon_q^{1/2}$.

\paragraph{Lower bound on $\sigma_g(C)$.}
The foot-point matrix $F \in \mathbb{C}^{g\times M}$, defined by
$F_{k,i} = \braket{v_k}{\psi_i}$, satisfies
$C_\parallel = V_g\,F$ where
$V_g = [\ket{v_1}|\cdots|\ket{v_g}] \in \mathbb{C}^{N\times g}$
has orthonormal columns.
Hence $\sigma_j(C_\parallel) = \sigma_j(F)$ for $j = 1,\dots,g$.
Applying Weyl's inequality in the other direction:
\begin{equation}\label{eq:sigma_g_bound}
  \sigma_g(C) \;\ge\; \sigma_g(C_\parallel) - \|C_\perp\|
  \;=\; \sigma_g(F) - \|C_\perp\|
  \;\ge\; \sigma_g(F) - \sqrt{M}\,\varepsilon_q^{1/2}.
\end{equation}

\paragraph{Positivity of $\sigma_g(F)$.}
Under Haar-random rotations and the nonzero-overlap hypothesis of
Proposition~\ref{prop:diversity},
$\operatorname{rank}(F) = g$ almost surely for $M \ge g$, which
guarantees $\sigma_g(F) > 0$ a.s.
Quantitatively, the columns of $F$ are independent draws from a
continuous distribution on $\mathbb{C}^g$; by the
$U(g)$-equivariance argument (Sec.~\ref{supp:proof:diversity}),
each normalised foot-point is uniform on
$S^{2g-1} \subset \mathcal{G}$.
Writing $F = \mathrm{diag}(\alpha_i)\,\hat{W}$ where $\hat{W}$
has unit-norm columns uniform on $S^{2g-1}$:
\begin{equation}
  \sigma_g(F)
  \;\ge\; \min_i|\alpha_i|\;\sigma_g(\hat{W})
  \;\ge\; (1-\varepsilon_q)^{1/2}\,\sigma_g(\hat{W}).
\end{equation}
The Gram matrix $\hat{W}^\dagger\hat{W}$ concentrates around its
expectation $\frac{M}{g}I_g$ (since
$\mathbb{E}[\hat{w}_i\hat{w}_i^\dagger] = \frac{1}{g}I_g$), so
for $M \ge g + p$ with $p \ge 1$ the minimum singular value
satisfies $\sigma_g(\hat{W}) \ge c\sqrt{p/g}$ with high probability
for a positive constant~$c$ depending on $g$.

\paragraph{SVD gap ratio.}
Combining the bounds:
\begin{equation}\label{eq:svd_gap_ratio}
  \frac{\sigma_g(C)}{\sigma_{g+1}(C)}
  \;\ge\;
  \frac{(1-\varepsilon_q)^{1/2}\,\sigma_g(\hat{W})
    - \sqrt{M}\,\varepsilon_q^{1/2}}
       {\sqrt{M}\,\varepsilon_q^{1/2}}.
\end{equation}
For $\varepsilon_q \ll (1-\varepsilon_q)\,\sigma_g(\hat{W})^2/M$
the first term in the numerator dominates, and the gap grows as
$\varepsilon_q^{-1/2}$.
This mirrors the classical RSI convergence, where the sketch error
decays as $(\sigma_{g+1}/\sigma_g)^{2q}$ with filter power~$q$: in
QRSI the analogous ratio is controlled by the per-branch
leakage~$\varepsilon_q$, which itself decays with the amplification
depth of the chosen primitive.

\section{Resource Analysis}
\label{supp:cost}


Each branch of the QRSI multi-state strategy involves three steps:
\begin{enumerate}
  \item \textbf{Random rotation} ($C_R$): synthesising $R_i$.
    For a product of $L = O(N\log N)$ Givens rotations
    (Sec.~\ref{supp:rotations}), this costs $O(N\log N)$ two-qubit gates.
    In the Hamiltonian-rotation picture, the classical cost of computing
    $H_i = R_i^\dagger H R_i$ is $O(N^2)$.
  \item \textbf{State preparation} ($C_{\mathrm{prep}}$): one invocation of
    the chosen primitive (Sec.~\ref{supp:primitives}).
    Cost depends on the primitive: $O(\gamma^{-1}\log(1/\eta))$ for
    imaginary-time evolution, $O(\mathrm{poly}(\log N)/\delta)$ for QPE,
    or the iteration budget for variational methods.
  \item \textbf{Classical post-processing}: forming and decomposing the
    coefficient matrix $C \in \mathbb{C}^{N\times M}$ costs $O(NM^2)$.
    On quantum hardware, the Gram matrix $G_{ij} = \braket{\psi_i}{\psi_j}$
    can be estimated via SWAP tests or randomized measurements at cost
    $O(M^2/\varepsilon^2)$.
\end{enumerate}


With $M \ge g$ independent branches:
\begin{equation}\label{eq:cost_multi_supp}
  \mathcal{C}_{\mathrm{multi}}
  = M \cdot (C_R + C_{\mathrm{prep}})
  + O(NM^2).
\end{equation}
Since the preparation stage delivers $O(1)$ ground-space overlap per
branch, no further amplification is needed and the cost is dominated by
$M \cdot C_{\mathrm{prep}}$, paid $M = O(g)$ times.

\section{Numerical Verification on a Random Hamiltonian}
\label{supp:random_hamiltonian}

The toric-code example in the main text demonstrates QRSI on a
physically motivated, structured Hamiltonian.  To confirm that the
framework is not specific to topological models, we test it on a
complex-Hermitian random Hamiltonian with \emph{planted} degeneracy
and deliberately structured ground-space eigenvectors.


We construct a $d\times d$ Hermitian matrix $H=V\,\mathrm{diag}(\bm{e})\,V^{\dagger}$ with
$d=256$ and a $g$-fold degenerate ground level at $E_0=0$.
The excited eigenvalues are drawn as $E_0 + \Delta + |z_j|$,
$z_j\sim\mathcal{N}(0,1)$, with gap $\Delta=2$.
The ground-space eigenvectors are \emph{not} random:
\begin{align}
  \ket{v_0} &= \frac{1}{\sqrt{d}}\sum_{k=0}^{d-1}\ket{k}\,,
  \label{eq:v0_rand}\\
  \ket{v_j} &= \frac{\ket{2j{-}1}-\ket{2j}}{\sqrt{2}},
  \quad j=1,\dots,g-1\,.
  \label{eq:vj_rand}
\end{align}
The vector $\ket{v_0}$ is fully delocalized and overlaps every
computational-basis state equally; each $\ket{v_j}$ ($j\ge1$) is
sparse and contributes weight to only two basis vectors.
Consequently, for a generic seed $\ket{e_k}$ (i.e.\ $k\notin\{1,\dots,2(g{-}1)\}$),
none of the sparse eigenvectors overlap it, and the projection
$\Pi_{\mathcal{G}}\ket{e_k}= \braket{v_0}{e_k}\ket{v_0}$
is exactly rank-one.
Since $2(g{-}1) \ll d$, the vast majority of computational-basis seeds
fall in this category, so the ensemble built from independent
random seeds is rank-one with high probability, and projective
filters \emph{without} rotation cannot resolve the full degeneracy.


Figure~\ref{fig:random_hamiltonian} shows the singular-value spectrum
of the QRSI coefficient matrix for $M=20$ branches using
imaginary-time preparation on two instances of this Hamiltonian,
with planted degeneracies $g=6$ and $g=11$.

\begin{figure}[htb!]
  \centering
  \includegraphics[width=\columnwidth]{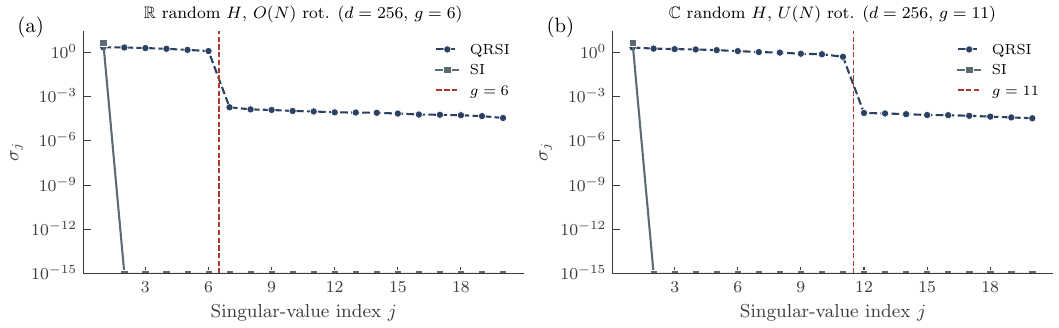}
  \caption{QRSI on structured complex-Hermitian random Hamiltonians
    ($d=256$) with planted ground-state degeneracy.
    \textbf{(a)}~$g=6$: with $U(N)$ Haar rotation (navy circles) a
    clear singular-value gap appears at $j=g$, correctly identifying
    the degeneracy.  Without rotation (slate squares) the ensemble is
    rank-one and no gap is visible.
    \textbf{(b)}~$g=11$: same behaviour at higher degeneracy.
    The dashed orange line marks the planted~$g$.}
  \label{fig:random_hamiltonian}
\end{figure}

With $U(N)$ Haar rotation, a clear spectral gap appears at $j=g$ in
both cases, correctly identifying the planted degeneracy.
Without rotation, all branches collapse to the dominant $\ket{v_0}$
direction (cf.\ Remark~\ref{rem:eigvec_structure}), the ensemble is
rank-one, and no spectral gap emerges.
This confirms that QRSI's diversity guarantee
(Proposition~\ref{prop:diversity}) extends to generic Hamiltonians
beyond topological models, provided the preparation family satisfies
the reachability condition.

\section{Relation to Exact Eigensolvers and Orthogonality-Based Methods}
\label{supp:orthogonality}

A classical exact eigensolver such as \texttt{eigh} resolves a $g$-fold
degenerate eigenspace by returning $g$ mutually orthogonal eigenvectors.
Orthogonality within $\mathcal{G}$ is both necessary and sufficient for
spanning: any $g$ vectors in a $g$-dimensional space that are pairwise
orthogonal form a basis. This observation suggests a natural hierarchy
of diversity mechanisms, ordered from weakest to strongest, which
clarifies the role of randomisation in QRSI.

The weakest requirement is \emph{linear independence}: $g$ vectors in
$\mathcal{G}$ whose projections are linearly independent suffice to
span the eigenspace, but the resulting basis may be arbitrarily
ill-conditioned.
The intermediate requirement is \emph{mutual orthogonality}: this is
exactly what \texttt{eigh}{} and deflation-based quantum methods
(VQD~\cite{higgott2019variational},
SSVQE~\cite{nakanishi2019subspace}) enforce, yielding perfectly
conditioned bases but at the cost of sequential execution.
The strongest requirement is \emph{Haar-uniform directions} on the unit
sphere $S^{2g-1} \subset \mathcal{G}$: this is what QRSI provides via
Proposition~\ref{prop:diversity} of the main text.

The key tradeoff is that enforcing orthogonality requires
\emph{inter-branch communication}: to prepare the $j$-th eigenvector
orthogonal to the preceding $j-1$, one must first know those $j-1$
states.
In VQD, for example, the loss function for the $j$-th state includes
penalty terms $\sum_{k<j} \beta_k |\!\braket{\psi_k|\psi_j}\!|^2$
that require estimating overlaps with all previously converged states,
each via SWAP tests or shadow tomography, an inherently sequential
and measurement-intensive procedure.
SSVQE similarly optimises a weighted sum of energies over an
orthonormal set, coupling all $g$ branches through a single
variational objective.
Exact eigensolvers internalise this coupling through the Gram--Schmidt
or Householder steps of the Lanczos/Arnoldi iteration, which enforce
orthogonality at every step via $O(g)$ inner products.

QRSI replaces this explicit orthogonality enforcement with
randomisation.
The Haar-random rotations $R_i$ produce foot-points
$f(R_i) \in S^{2g-1}$ that are independently and uniformly distributed
over the unit sphere in $\mathcal{G}$ (Proposition~\ref{prop:diversity}).
Since independent uniform vectors in $\mathbb{C}^g$ are almost surely
linearly independent for $M \ge g$, QRSI achieves the minimal
requirement (linear independence, hence spanning)
without any inter-branch coupling.
Moreover, the uniform distribution provides more than bare linear
independence: the expected Gram matrix of $M$ independent uniform
vectors on $S^{2g-1}$ concentrates around
$(1/g)\,I_g + O(1/\sqrt{g})$ off-diagonal entries, so the ensemble is
approximately orthogonal, and the coefficient matrix $C$ is
well-conditioned, without any explicit orthogonalisation step.

Table~\ref{tab:orth_vs_rand} summarises the structural comparison.

\begin{table}[h]
  \centering
  \setlength{\tabcolsep}{5pt}
  \begin{tabular}{lcc}
    \toprule
    & \textbf{Orthogonality enforcement} & \textbf{QRSI (randomisation)} \\
    \midrule
    States needed & Exactly $g$ & $M \ge g$ (mild oversampling) \\
    Inter-branch coupling & Required (deflation / penalty) & None \\
    Parallelisation & Sequential by construction & Embarrassingly parallel \\
    Post-processing & None (already orthonormal) & SVD or rank detection \\
    Overlap estimation & Per pair, per iteration & Only at extraction \\
    Numerical conditioning & Perfect & Near-perfect (concentration) \\
    \bottomrule
  \end{tabular}
  \caption{Structural comparison of orthogonality-based eigenspace
    resolution (as in exact eigensolvers, VQD, SSVQE) and
    randomisation-based resolution (QRSI).}
  \label{tab:orth_vs_rand}
\end{table}

This comparison also illuminates
Proposition~\ref{prop:diversity_anti_concentration} in the main text.
Full Haar uniformity is strictly stronger than what is needed for
spanning: diversity requires only $g$ linearly independent foot-points,
not uniformity on $S^{2g-1}$. The $(\eta,\delta)$-anti-concentration
condition captures this minimal requirement. For the toric-code
example with $g = 4$ and $\eta = 0.97$ (achievable with $\delta = 0.1$
under Haar-like circuits), the sample complexity is
$M \le 24$ branches for $99\%$ spanning probability, showing that the
practical overhead of randomisation over exact orthogonality
enforcement is mild.

The analogy with classical numerical linear algebra is precise.
The Lanczos algorithm~\cite{Liesen_2012} with explicit reorthogonalisation is the
classical counterpart of VQD/SSVQE: it is sequential, requires
$O(g)$ inner products per step, and produces an exactly orthogonal
basis.
Classical randomised subspace iteration~\cite{halko2011finding,
Saad2011eigenvalue} is the counterpart of QRSI: it sketches with a
random probe matrix, amplifies via a spectral filter, and extracts the
basis via an SVD, no explicit reorthogonalisation, trivially
parallelisable, at the cost of mild oversampling and a final rank-revealing
factorisation.
QRSI inherits this same tradeoff from its classical ancestor, elevating
it to the quantum setting where the cost of inter-branch overlap
estimation makes the sequential approach substantially more expensive.

\end{document}